\theoremstyle{plain}
\newtheorem{theorem}{Theorem}
\newtheorem{lemma}{Lemma}[section]
\newtheorem{definition}[lemma]{Definition}
\newtheorem*{claim*}{Claim}
\newtheorem*{proposition*}{Proposition}
\newtheorem*{lemma*}{Lemma}
\newtheorem*{problem*}{Problem}
\newenvironment{custom_lemma}[1]
  {\innercustomthm}
  {\endinnercustomthm}
\newenvironment{custom_def}[1]
  {\innercustomdef}
  {\endinnercustomdef}
\newcommand{\C}{\mathcal{C}}
\begin{document}

\title{Fair Colorful $k$-Center Clustering\thanks{A preliminary version of this work was presented at the 21st Conference on Integer Programming and Combinatorial Optimization (IPCO 2020). An independent work of Anegg et al. \cite{aneggfairkcenter}, presented at the same venue, gave a 4-approximation for Colorful k-Center with constantly many colors using different techniques. This work is supported by the Swiss National Science Foundation project 200021-184656 ``Randomness in Problem Instances and Randomized Algorithms.''}
}


\author{Xinrui Jia$^{\dagger}$        \and
        Kshiteej Sheth$^{\dagger}$ \and 
        Ola Svensson\footnote{\'{E}cole Polytechnique F\'{e}d\'{e}rale de Lausanne. Emails: \{xinrui.jia, kshiteej.sheth, ola.svensson\}@epfl.ch}
}


\date{}

\maketitle

\begin{abstract}
An instance of \textit{colorful $k$-center} consists of points in a metric space that are colored red or blue, along with an integer $k$ and a coverage requirement for each color.
The goal is to find the smallest radius $\rho$ such that there exist balls of radius $\rho$ around $k$ of the points that meet the coverage requirements.

The motivation behind this problem is twofold. First, from fairness considerations: each color/group should receive a similar service guarantee, and second, from the algorithmic challenges it poses: this problem combines the difficulties of clustering along with the subset-sum problem.
In particular, we show that this combination results in strong integrality gap lower bounds for several natural linear programming relaxations.

Our main result is an efficient approximation algorithm that overcomes these difficulties to  achieve an approximation guarantee of $3$, nearly matching the tight approximation guarantee of $2$ for the classical $k$-center problem which this problem generalizes. 

\end{abstract}

\section{Introduction}

In the \textit{colorful k-center} problem introduced in \cite{DBLP:conf/esa/Bandyapadhyay0P19}, we are given a set of $n$ points $P$ in a metric space partitioned into a set $R$ of red points and a set $B$ of blue points, along with parameters $k$, $r$, and $b$.
The goal is to find a set of $k$ centers $C \subseteq P$ that minimizes $\rho$ so that balls of radius $\rho$ around each point in $C$ cover at least $r$ red points and at least $b$ blue points.
More generally, the points can be partitioned into $\omega$ color classes $\C_1, \dots, \C_{\omega}$, with coverage requirements $p_1, \dots, p_{\omega}$. To keep the exposition of our ideas as clean as possible, we concentrate the bulk of our discussion to the version with two colors. In Section \ref{mult-colors} we show how our algorithm can be generalized for $\omega$ color classes with an exponential dependence on $\omega$ in the running time in a rather straightforward way, thus getting a polynomial time algorithm for constant $\omega$.

This generalization of the classic $k$-center problem has applications in situations where fairness is a concern.
For example, if a telecommunications company is required to provide service to at least 90\% of the people in a country, it would be cost effective to only provide service in densely populated areas.
This is at odds with the ideal that at least some people in every community should receive service.
In the absence of color classes, an approximation algorithm could be ``unfair" to some groups by completely considering them as outliers. 
The inception of fairness in clustering can be found in the recent paper \cite{DBLP:conf/nips/Chierichetti0LV17} (see also~\cite{DBLP:conf/icml/BackursIOSVW19,DBLP:journals/corr/abs-1905-13651}), which uses a related but incomparable notion of fairness.
Their notion of fairness requires \emph{each individual cluster} to have a balanced number of points from each color class, which leads to very different algorithmic considerations and is motivated by other applications, such as ``feature engineering''.

The other motive for studying the colorful $k$-center problem derives from the algorithmic challenges it poses. 
One can observe that it generalizes the \textit{$k$-center problem with outliers}, which is equivalent to only having red points and needing to cover at least $r$ of them.  This outlier version is already more challenging than the classic $k$-center problem:  only recent results give tight $2$-approximation algorithms \cite{DBLP:conf/icalp/ChakrabartyGK16, DBLP:journals/talg/HarrisPST19}, improving upon the $3$-approximation guarantee of \cite{charikar2001algorithms}. In contrast, such algorithms for the classic $k$-center problem have been known since the '80s\cite{hochbaum1985best,gonzalez1985clustering}. That the approximation guarantee of $2$ is tight, even for classic $k$-center, was proved in \cite{hsu1979easy}.

At the same time, a subset-sum problem with polynomial-sized numbers is embedded within the colorful $k$-center problem. 
To see this, consider $n$ numbers $a_1, \ldots, a_n$ and let $A = \sum_{i=1}^n a_i$.  Construct an instance of the colorful $k$-center problem with $r = k\cdot A + A/2$, $b = k\cdot A  - A/2$, and for every $i\in \{1, \ldots, n\}$, a ball of radius one containing $A+a_i$ red points and $A- a_i$ blue points.  These balls are assumed to be far apart so that any single ball that covers two of these balls must have a very large radius.
It is easy to see that the constructed colorful $k$-center instance has a solution of radius one if and only if there is a size $k$ subset of the $n$ numbers whose sum equals $A/2$.

We use this connection to subset-sum to show that the standard linear programming (LP) relaxation of the colorful $k$-center problem has an unbounded integrality gap even after a linear number of rounds of the powerful Lasserre/Sum-of-Squares hierarchy (see Section \ref{SoS}). We remark that  the standard linear programming relaxation gives a $2$-approximation algorithm for the outliers version even without applying lift-and-project methods. Another natural approach for strengthening the standard linear programming relaxation is to add flow-based inequalities specially designed to solve subset-sum problems. However, in Section \ref{flow-section},  we prove that they do not improve the integrality gap due to the clustering feature of the problem. This shows that clustering and the subset-sum problem are intricately related in colorful $k$-center. This interplay makes the problem more complex and prior to our work only a randomized constant-factor approximation algorithm was known when the points are in $\mathbb{R}^2$ with an approximation guarantee greater than $6$~\cite{DBLP:conf/esa/Bandyapadhyay0P19}.

Our main result overcomes these difficulties and we give a nearly tight approximation guarantee:
\begin{theorem}
    There is a $3$-approximation algorithm for the colorful $k$-center problem.
    \label{thm:main}
\end{theorem}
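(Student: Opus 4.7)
The plan is to combine a guess of the optimal radius with an LP relaxation, a clustering preprocessing, and careful enumeration to cope with the embedded subset-sum structure. The approximation factor of $3$ will come from enlarging balls by a factor of $3$ relative to the guessed optimal radius.

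First, guess $\rho^* = \mathrm{OPT}$ by trying all $O(n^2)$ pairwise distances; this reduces the problem to a feasibility question at a fixed radius. Second, solve the natural LP relaxation at radius $\rho^*$: variables $x_i, y_i \in [0,1]$ represent the extent to which point $i$ is opened as a center and covered, respectively, subject to $\sum_i x_i \leq k$, $y_i \leq \sum_{j : d(i,j) \leq \rho^*} x_j$, $\sum_{i \in R} y_i \geq r$, and $\sum_{i \in B} y_i \geq b$. This LP is feasible whenever $\rho^* \geq \mathrm{OPT}$.

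Third, run a clustering preprocessing: greedily pick representatives $c_1, \ldots, c_m$ with pairwise distances exceeding $2\rho^*$ such that every point lies within $2\rho^*$ of some $c_\ell$, and aggregate each point's fractional mass onto its representative to obtain variables $z_\ell \in [0,1]$. The key geometric fact is that opening $c_\ell$ with a ball of radius $3\rho^*$ covers every point that any LP center inside cluster $\ell$ could cover at radius $\rho^*$. Hence an integer solution opening a subset $S$ of representatives at radius $3\rho^*$ collects $\sum_{\ell \in S} R_\ell$ red and $\sum_{\ell \in S} B_\ell$ blue points, where $R_\ell$ and $B_\ell$ count the red and blue points in cluster $\ell$. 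The residual task is to pick $|S| \leq k$ satisfying $\sum_{\ell \in S} R_\ell \geq r$ and $\sum_{\ell \in S} B_\ell \geq b$.

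The main obstacle is this final selection, which is a two-dimensional budgeted problem that directly embeds the hard subset-sum instances from the introduction. A vertex solution of the aggregated LP has only a constant number of fractional variables (since there are only three non-trivial global constraints), but the integrality gap lower bounds rule out purely local LP-rounding. The plan is therefore to guess, in polynomial time, a small constant-size sketch of $\mathrm{OPT}$ — for instance the pair of cluster representatives in which $\mathrm{OPT}$ contributes its most delicate color mass, together with how its coverage splits across the two color constraints — and for each guess to commit to opening those representatives in the integer solution. Conditioned on the correct guess, one of the two color constraints becomes slack in the residual instance, which can then be completed by a greedy or matroid-style argument using the LP solution on the remaining clusters. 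Returning the best integer solution across all polynomially many guesses produces an integer selection of at most $k$ representatives whose balls of radius $3\rho^*$ satisfy both coverage requirements, proving the $3$-approximation.
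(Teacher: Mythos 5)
Your setup (guess the radius, solve the natural LP, aggregate onto well-separated representatives so that a vertex of the aggregated LP has only a constant number of fractional variables, and pay a factor $3$ in the radius) matches the skeleton of the pseudo-approximation the paper builds on. But the proof has a genuine gap exactly where you place your ``plan'': the final integer selection. Your claim that after guessing a constant-size sketch of $\mathrm{OPT}$ ``one of the two color constraints becomes slack in the residual instance, which can then be completed by a greedy or matroid-style argument'' is unsubstantiated, and the subset-sum reduction in the paper's introduction shows why it fails. In those instances every cluster contributes $A+a_i$ red and $A-a_i$ blue points, and a feasible selection must hit both coverage targets exactly; fixing any constant number of clusters leaves a residual instance that is still a tight subset-sum, where neither constraint is slack and greedy/matroid arguments do not apply. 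This is precisely the difficulty the paper's Section 3 formalizes: the LP (even after linearly many Sum-of-Squares rounds, or with flow constraints) has unbounded gap, so no local rounding or simple completion of a vertex solution can work.

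What the paper actually does to close this gap is substantially more involved and is worth contrasting with your sketch. It splits into two cases. If some radius-$3$ ball covers two balls of $\mathrm{OPT}$, it opens that ball, runs the $k+1$-center pseudo-approximation with budget $k-2$ on the rest, and the spare center absorbs the extra fractional variable. Otherwise (the well-separated case) it guesses three $\mathrm{OPT}$ centers maximizing a ``Gain'' quantity to establish a uniform bound $\tau$ on how many red points any flower can add; it then peels off \emph{dense} regions (balls with more than $2\tau$ red points) and solves them \emph{exactly} by a $3$-dimensional dynamic program, using well-separation to ensure each removed group contains at most one $\mathrm{OPT}$ center; on the remaining sparse part every flower has at most $3\tau$ red points, so closing the one fractional center loses at most $3\tau$ red points, which is exactly compensated by the $3\tau$ red points banked by the Gain guesses. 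Your proposal contains none of these ingredients (no case analysis on separation, no dense/sparse decomposition, no dynamic program, no mechanism to bound or compensate the loss from the fractional variable), so as written it does not constitute a proof of the theorem.
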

As aforementioned, our techniques can be easily extended to a constant number of color classes but we
restrict the discussion here to two colors.

On a very high level, our algorithm manages to decouple the clustering and the subset-sum aspects. First, our algorithm guesses certain centers of the optimal solution that it then uses to partition the point set into a ``dense'' part $P_d$ and a ``sparse'' part $P_s$. 
The dense part is clustered using a subset-sum instance while the sparse set is clustered using the techniques of Bandyapadhyay, Inamdar, Pai, and Varadarajan~\cite{DBLP:conf/esa/Bandyapadhyay0P19} (see Section \ref{pseudo-approx}). 
Specifically, we use the pseudo-approximation of~\cite{DBLP:conf/esa/Bandyapadhyay0P19} that satisfies the coverage requirements using $k+1$ balls of at most twice the optimal radius.

While our approximation guarantee is nearly tight, it remains an interesting open problem to give a $2$-approximation algorithm or to show that the ratio $3$ is tight. One possible direction is to understand the strength of the relaxation obtained by combining the Lasserre/Sum-of-Squares hierarchy with the flow constraints. While we show that individually they do not improve the integrality gap, we believe that their combination can lead to a strong relaxation. 

\textbf{Independent work.} Independently and concurrently to our work, authors in ~\cite{aneggfairkcenter} obtained a $4$-approximation algorithm for the colorful k-center problem with $\omega = O(1)$ using different techniques than the ones described in this work. Furthermore they show that, assuming $P\neq NP$, if $\omega$ is allowed to be unbounded then the colorful k-center problem admits no algorithm guaranteeing a finite approximation. They also show that assuming the Exponential Time Hypothesis, colorful k-center is inapproximable if $\omega$ grows faster than $\log n$.

\textbf{Organization.} We begin by giving some notation and definitions and describing the pseudo-approximation algorithm in \cite{DBLP:conf/esa/Bandyapadhyay0P19}.
In fact, we then describe a 2-approximation algorithm on a certain class of instances that are \textit{well-separated}, and the 3-approximation follows almost immediately. 
This 2-approximation proceeds in two phases: the first is dedicated to the guessing of certain centers, while the second processes the dense and sparse sets. 

Section \ref{mult-colors} explains the generalization to $\omega$ color classes.
In Section 3 we present our integrality gaps under the Sum-of-Squares hierarchy and additional constraints deriving from a flow network to solve subset-sums.

\section{A 3-Approximation Algorithm}
In this section we present our 3-approximation algorithm. 
We briefly describe the pseudo-approxima- tion algorithm of Bandhyapadhyay et al. \cite{DBLP:conf/esa/Bandyapadhyay0P19} since we use it as a subroutine in our algorithm. 

\textbf{Notation:} We assume that our problem instance is normalized to have an optimal radius of one and we refer to the set of centers in an optimal solution as $OPT$.
The set of all points at distance at most $\rho$ from a point $j$ is denoted by $\mathcal{B}(j, \rho)$ and we refer to this set as a \textit{ball of radius $\rho$ at $j$}. 
We write $\mathcal{B}(j)$ for $\mathcal{B}(j,1)$.
By a \textit{ball of $OPT$} we mean $\mathcal{B}(j)$ for some $j \in OPT$.

\subsection{The Pseudo-Approximation Algorithm}\label{pseudo-approx}

\begin{figure}[t]
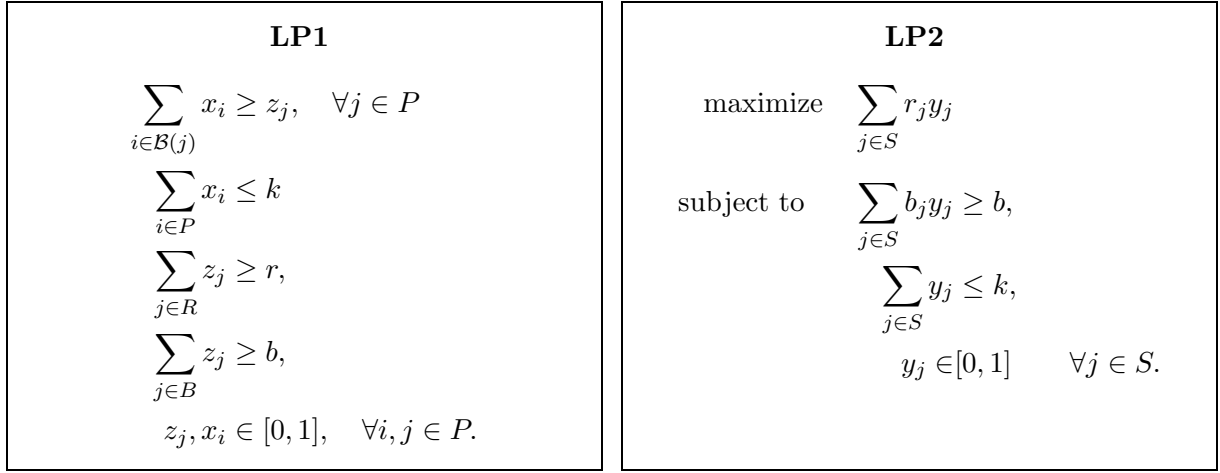

    \fbox{
        \begin{minipage}[t][6.0cm][t]{0.45\textwidth}
        \begin{center}
        \vspace{2mm}
        \textbf{LP1} 
        \vspace{-4mm}\end{center}
\begin{align*}
     \sum_{i \in \mathcal{B}(j)} x_i &\geq z_j, \quad \forall j \in P \\
     \sum_{i \in P} x_i &\leq k \\
     \sum_{j \in R} z_j &\geq r,  \\
     \sum_{j \in B} z_j &\geq b,  \\
     z_j, x_i & \in [0,1], \quad \forall i,j \in P.
\end{align*}
        \end{minipage}
    }
    \begin{minipage}{0.1\textwidth}
    \end{minipage}
    \fbox{
        \begin{minipage}[t][6.0cm][t]{0.45\textwidth}
        \begin{center}
        \vspace{2mm}
        \textbf{LP2} 
        \vspace{-4mm}\end{center}
    \begin{align*}
        \text{maximize} \quad  \sum_{j \in S} r_j y_{j} &\\[2mm]
        \text{subject to} \quad \ \  \sum_{j\in S} b_j y_j &\geq b,  \\
        \sum_{j\in S} y_j& \leq k, \\
        y_j  \in &[0,1] \qquad \forall j\in S.
    \end{align*}
        \end{minipage}
    }
    \caption{The linear programs used in the pseudo-approximation algorithm.}
    \label{fig:LPs}
\end{figure}
The algorithm of Bandhyapadhyay et al. \cite{DBLP:conf/esa/Bandyapadhyay0P19} 
first guesses the optimal radius for the instance (there are at most $O(n^2)$ distinct values the optimal radius can take), which we assume by normalization to be one, and considers the natural LP relaxation LP1 depicted on the left in Figure~\ref{fig:LPs}. 
The variable $x_i$ indicates how much point $i$ is fractionally opened as a center and $z_i$ indicates the amount that $i$ is covered by centers. 

Given a fractional solution to LP1, the algorithm of~\cite{DBLP:conf/esa/Bandyapadhyay0P19} finds a clustering of the points. 
The clusters that are produced are of radius two, and with a simple modification (details can be found in Appendix \ref{clusteringproof}), can be made to have a special structure that we call a flower:
\begin{definition}\label{def:flowers}
For $j \in P$, a \textbf{flower} centered at $j$ is the set $\mathcal{F}(j) = \cup_{i\in   \mathcal{B}(j)} \mathcal{B}(i)$. 
\end{definition}
More specifically, given a fractional solution $(x,z)$ to LP1, the clustering algorithm in~\cite{DBLP:conf/esa/Bandyapadhyay0P19}  produces a set of points $S \subseteq P$ and a cluster $C_j \subseteq P$ for every $j\in S$ such that:
\begin{enumerate}
    \item The set $S$ is a subset of  the points $\{j\in P : z_j > 0\}$ with  positive $z$-values.
    \item For each $j\in S$, we have $C_j \subseteq \mathcal{F}(j)$ and the clusters  $\{C_j\}_{j\in S}$ are pairwise disjoint.
    \item If we let $r_j = |C_j \cap R|$ and $b_j = |C_j \cap B|$ for $j\in S$, then the linear program LP2 (depicted on the right in Figure~\ref{fig:LPs}) has a feasible solution $y$ of value at least $r$.
\end{enumerate}
As~LP2 has only two non-trivial constraints, any extreme point will have at most two variables attaining strictly fractional values. So at most $k+1$ variables of $y$ are non-zero. 
The pseudo-approximation of~\cite{DBLP:conf/esa/Bandyapadhyay0P19} now simply takes those non-zero points as centers. 
Since each flower is of radius two,  this gives a $2$-approximation algorithm that opens at most $k+1$ centers. 
(Note that, as the clusters $\{C_j\}_{j\in S}$ are pairwise disjoint, at least $b$ blue points are covered, and at least $r$ red points are covered since the value of the solution is at least $r$.)

Obtaining a constant-factor approximation algorithm that only opens $k$ centers turns out to be significantly more challenging.
Nevertheless, the above techniques form an important subroutine in our algorithm. Given a fractional solution $(x,z)$ to~{LP1}, we proceed as above to find $S$ and an extreme point to~LP2 of value at least $r$. However,  instead of selecting all points with positive $y$-value, we, in the case of two fractional values, only select the one whose cluster covers more blue points. This gives us a solution of at most $k$ centers whose clusters cover at least $b$ blue points. 
Furthermore, the number of red points that are covered is at least $r- \max_{j\in S} r_j$ since we disregarded at most one center. 
As $S \subseteq \{j: z_j >0 \}$ (see first property above) and $C_j \subseteq \mathcal{F}(j)$ (see second property above), we have $\max_{j\in S} r_j \leq \max_{j: z_j > 0} |\mathcal{F}(j) \cap R|$. 
We summarize the obtained properties in the following lemma.
\begin{lemma}
    Given a fractional solution $(x,z)$ to~{LP1}, there is a polynomial-time algorithm that outputs at most $k$ clusters of radius two that cover at least $b$ blue points and at least   $r - \max_{j: z_j > 0} |\mathcal{F}(j) \cap R|$ red points.
    \label{lem:pseudo_approx}
\end{lemma}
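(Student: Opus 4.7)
The plan is to execute the algorithm described in the paragraph preceding the lemma and verify the three claimed properties (size at most $k$, blue coverage at least $b$, red coverage at least $r - \max_{j: z_j > 0} |\mathcal{F}(j) \cap R|$) hold for its output.

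First I would invoke the clustering subroutine of Bandyapadhyay et al.\ on $(x,z)$ to obtain the set $S \subseteq \{j : z_j > 0\}$ and the pairwise-disjoint clusters $\{C_j\}_{j \in S}$ with $C_j \subseteq \mathcal{F}(j)$, along with the guarantee that LP2 (built from $r_j = |C_j \cap R|$ and $b_j = |C_j \cap B|$) admits a feasible $y$ of objective value at least $r$. I would then compute an extreme point $y^\ast$ of LP2 of value at least $r$. Since LP2 has only two non-trivial constraints ($\sum b_j y_j \geq b$ and $\sum y_j \leq k$), any extreme point has at most two strictly fractional coordinates; call them $y^\ast_{j_1}, y^\ast_{j_2}$ when both exist. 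I would then output the set $T$ consisting of every $j$ with $y^\ast_j = 1$ together with a single fractional index, namely $\arg\max\{b_{j_1}, b_{j_2}\}$ (if there are fewer than two fractional coordinates, simply include whatever fractional index exists, or none).

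Bounding $|T|$ is immediate: in the two-fractional case, the extreme point must make both non-trivial constraints tight (otherwise we could perturb both fractional coordinates along a direction preserving $\sum y_j$ and $\sum b_j y_j$, contradicting extremality), so $\sum_j y^\ast_j = k$ and hence the integral part of $y^\ast$ contributes exactly $k - (y^\ast_{j_1}+y^\ast_{j_2}) \le k-1$ indices; adding one fractional index gives $|T| \le k$. If there are fewer fractional coordinates the bound is even easier. The blue-coverage bound uses the same tight constraint: since $y^\ast_{j_1} + y^\ast_{j_2} \le 1$ (and in the tight case equals $1$), selecting the fractional index with the larger $b_j$ dominates the weighted combination $y^\ast_{j_1} b_{j_1} + y^\ast_{j_2} b_{j_2}$, so the total blue coverage of $T$ is at least $\sum_j y^\ast_j b_j \ge b$ by disjointness of the $C_j$'s.

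For the red coverage, I would observe that dropping at most one fractional index from the support of $y^\ast$ loses at most $\max_{j\in S} r_j$ red points, because $y^\ast$ is $[0,1]$-valued and the dropped index contributes at most $r_j$ to the objective. Hence the number of red points covered is at least $r - \max_{j\in S} r_j$. Finally, since $r_j = |C_j \cap R| \le |\mathcal{F}(j)\cap R|$ and $S \subseteq \{j : z_j > 0\}$, this is at least $r - \max_{j: z_j > 0} |\mathcal{F}(j) \cap R|$, as required. The only subtlety I anticipate is the blue-coverage step, which requires the extremality argument to conclude $y^\ast_{j_1}+y^\ast_{j_2} \le 1$ so that the ``pick the larger'' trick actually dominates the fractional contribution; everything else is bookkeeping from the properties of $S$ and the structure of extreme points of LP2.
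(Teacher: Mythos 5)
Your proposal is correct and follows essentially the same route as the paper: run the clustering subroutine of Bandyapadhyay et al., take an extreme point of LP2 of value at least $r$, and in the case of two fractional coordinates keep only the one with larger $b_j$, losing at most $\max_{j\in S} r_j \le \max_{j:z_j>0}|\mathcal{F}(j)\cap R|$ red points. The only difference is that you spell out the extreme-point bookkeeping (tightness of both non-trivial constraints, hence $y^\ast_{j_1}+y^\ast_{j_2}=1$) that the paper leaves implicit; this is a welcome addition but not a different argument.
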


We can thus find a $2$-approximate solution that covers sufficiently many blue points but may cover fewer red points than necessary. 
The idea now is that, if the number of red points in any cluster is not too large, i.e., $\max_{j: z_j > 0} |\mathcal{F}(j) \cap R|$ is ``small'',  then we can hope to meet the coverage requirements for the red points by increasing the radius around some opened centers.
Our algorithm builds on this intuition to get a $2$-approximation algorithm using at most $k$ centers for \textit{well-separated} instances as defined below. 

\begin{definition}
An instance of colorful $k$-center is \textbf{well-separated} if there does not exist a ball of radius three that covers at least two balls of $OPT$.
\end{definition}

Our main result of this section can now be stated as follows:
\begin{theorem}
    There is a $2$-approximation algorithm for {well-separated} instances. 
    \label{thm:2approx}
\end{theorem}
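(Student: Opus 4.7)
The plan is to prove Theorem~\ref{thm:2approx} by combining the pseudo-approximation of Lemma~\ref{lem:pseudo_approx} with a guessing step that explicitly handles the red shortfall $\max_{j:z_j>0}|\mathcal{F}(j)\cap R|$. Conceptually, we want to identify in advance those OPT centers whose balls carry so many red points that Lemma~\ref{lem:pseudo_approx} cannot afford to drop them, and to cover those regions by opening proxy centers directly. Well-separatedness limits the number of such ``heavy'' OPT centers to a constant, which makes the enumeration polynomial.

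\textbf{Guessing phase.} Fix a threshold $T = r/c$ for a suitably large constant $c$, and call an OPT center $j^\star$ \emph{heavy} if $|\mathcal{B}(j^\star)\cap R|\geq T$. At most $c$ OPT centers can be heavy, so I would enumerate over all $O(n^c)$ choices of proxy tuples $p_1,\dots,p_g\in P$ with $g\leq c$, each $p_i$ intended to lie in a distinct heavy ball $\mathcal{B}(j^\star_i)$. Opening a radius-$2$ ball at each $p_i$ covers $\mathcal{B}(j^\star_i)$ by the triangle inequality, and well-separatedness implies that the balls $\mathcal{B}(p_i,2)$ pairwise contain no non-guessed OPT ball (otherwise a ball of radius $3$ would cover two OPT balls). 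Hence $OPT\setminus\{j^\star_1,\dots,j^\star_g\}$ remains a radius-$1$ feasible solution on the residual point set $P_s = P\setminus\bigcup_i\mathcal{B}(p_i,2)$.

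\textbf{Dense and sparse processing.} On the dense side $P_d=\bigcup_i\mathcal{B}(p_i,2)$ we do not simply charge all the removed points toward coverage; instead we set up a subset-sum-like choice that decides, for each guessed ball, how many red and blue points between $|\mathcal{B}(p_i)\cap\cdot|$ and $|\mathcal{B}(p_i,2)\cap\cdot|$ to count. Because $g$ is constant, this is solvable exactly in polynomial time. On the sparse side we solve LP1 with reduced parameters $k-g$, $r-r^d$, $b-b^d$, and invoke Lemma~\ref{lem:pseudo_approx}. Since every heavy OPT center has been removed, every flower $\mathcal{F}(j)\subseteq P_s$ with $z_j>0$ satisfies $|\mathcal{F}(j)\cap R|<T$, so the red shortfall incurred by the pseudo-approximation is strictly less than $T$ and can be absorbed by tuning the subset-sum on the dense side. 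Combining the two sides yields at most $k$ radius-$2$ centers covering at least $r$ reds and at least $b$ blues.

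\textbf{Main obstacle.} The most delicate ingredient is verifying that the dense subset-sum really has enough flexibility to absorb a shortfall of up to $T$ reds from the sparse side, while simultaneously respecting the blue coverage requirement and the budget of $k$ centers. Justifying this calls for a careful accounting, using well-separatedness to ensure that (i) $P_d$ and $P_s$ do not overlap, (ii) the residual LP1 is feasible with value at most one, and (iii) any flower of a sparse-side point with $z_j>0$ remains disjoint from the removed dense region, so that the ``$<T$'' bound above is genuinely valid. Once these structural facts are established, Lemma~\ref{lem:pseudo_approx} plus the subset-sum choice closes the argument and yields the claimed $2$-approximation.
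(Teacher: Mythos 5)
Your high-level plan (guess a few special OPT centers, split into a dense part handled exactly and a sparse part handled by Lemma~\ref{lem:pseudo_approx}, and pay for the dropped fractional center out of the dense/guessed part) is the right shape, but two of its load-bearing claims do not hold as stated, and they are exactly where the paper has to work hardest.

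First, the claim that after deleting the heavy OPT balls ``every flower $\mathcal{F}(j)\subseteq P_s$ with $z_j>0$ satisfies $|\mathcal{F}(j)\cap R|<T$'' is false. A flower is a radius-two region around an arbitrary point $j$, and it can accumulate many red points that lie in no OPT ball at all (outliers), or that are spread over several light OPT balls it partially meets; removing only balls centered at \emph{heavy OPT centers} does not control this. This is why the paper's notion of a dense point (Definition~\ref{denseset}) is about arbitrary points of $P_s$, and why an \emph{iterative} removal loop is needed rather than one round of guessing. Moreover, to bound the red mass of a sparse flower the paper needs a second ingredient: the bound~\eqref{eq:gain_bound} that $|\mathbf{Gain}(c,j)\cap P_4|\le\tau$ for every remaining OPT center $c$, which is what the gain-maximizing choice of $c_1,c_2,c_3,q_1,q_2,q_3$ in Phase~I buys. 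Only the combination (no dense points left, hence $|\mathcal{B}(c)\cap P_s\cap R|\le 2\tau$, plus the gain bound $\le\tau$) yields $|\mathcal{F}(j)\cap R|\le 3\tau$ for the relevant $j$. A related, more minor issue: with the absolute threshold $T=r/c$ the number of heavy OPT balls is only bounded by $|R|/T$, not by $c$, since $|R|$ can greatly exceed $r$; so the enumeration is not obviously polynomial.

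Second, the statement that the red shortfall ``can be absorbed by tuning the subset-sum on the dense side'' is asserted but never grounded: the dense/guessed side covers whatever OPT covers there plus whatever extra happens to lie in the enlarged radius-two regions, and nothing in your construction guarantees this surplus is at least $T$ (it can be zero if all nearby red points already lie inside the heavy balls). The paper's resolution is to \emph{define} $\tau$ as the third-largest guessed gain and to guess the centers so that the three flowers $\mathcal{F}(q_1),\mathcal{F}(q_2),\mathcal{F}(q_3)$ provably cover $|R\cap\mathbf{Guess}|+3\tau$ red points (Lemma~\ref{lem:phaseI}), while simultaneously $3\tau$ upper-bounds the sparse-side loss (Lemma~\ref{lem:twoalgs}); the surplus and the deficit are calibrated against the same quantity. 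You would need to replace your fixed threshold $T$ by such a data-dependent quantity for the accounting to close. (There is also a small geometric slip: a radius-two ball around a proxy $p_i$ can intersect, without containing, another OPT ball at distance three, so residual feasibility of $OPT$ on $P\setminus\bigcup_i\mathcal{B}(p_i,2)$ needs the flower structure, as in the proof of Lemma~\ref{lem:phaseI}, not arbitrary radius-two balls.)
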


The above theorem immediately implies Theorem~\ref{thm:main}, i.e., the $3$-approximation algorithm for general instances. Indeed,
if the instance is not well-separated, we can find a ball of radius three that covers at least two balls of $OPT$ by trying all $n$ points and running the pseudo-approximation of~\cite{DBLP:conf/esa/Bandyapadhyay0P19} on the remaining uncovered points with $k-2$ centers. In the correct iteration, this gives us at most $k-1$ centers of radius two, which when combined with the ball of radius three that covers two balls of $OPT$, is a 3-approximation.

Our algorithm for well-separated instances now proceeds in two phases with the objective of finding a subset of $P$ on which the pseudo-approximation algorithm produces subsets of flowers containing not too many red points. In addition, we maintain a partial solution set of centers (some guessed in the first phase), so that we can expand the radius around these centers to recover the deficit of red points from closing one of the fractional centers. 

\subsection{Phase I}\label{separated}
In this phase we will guess some balls of $OPT$ that can be used to construct a bound on $\max_{j: z_j > 0} |R\cap \mathcal{F}(j)|$. To achieve this, we define the notion of \textbf{Gain}$(p,q)$ for any point $p\in P$ and $q\in \mathcal{B}(p)$.
\begin{definition}
For any $p \in P$ and $q \in \mathcal{B}(p)$, let
\begin{align*}
    \textbf{Gain}(p,q):= R \cap \left( \mathcal{F}(q) \setminus \mathcal{B}(p) \right)
\end{align*}
be the set of \emph{red} points added to $\mathcal{B}(p)$ by forming a flower centered at $q$.
\end{definition}
Our algorithm in this phase proceeds by guessing three centers $c_1, c_2, c_3$ of the optimal solution $OPT$:
\begin{center}
\begin{minipage}{0.95\textwidth}
\begin{mdframed}[hidealllines=true, backgroundcolor=gray!15]
 For $i = 1, 2, 3$, guess the center $c_i$ in $OPT$ and calculate the point $q_i \in \mathcal{B}(c_i)$ such that the number of red points in $\textbf{Gain}(c_i, q_i)\cap P_i$ is maximized over all possible $c_i$, where
\begin{align*}
    P_1 &= P \\
    P_i &= P_{i-1} \setminus \mathcal{F}(q_{i-1}) \mbox{\,\,\,\, for } 2 \leq i \leq 4.
\end{align*}
\end{mdframed}
\end{minipage}
\end{center}
The time it takes to guess $c_1, c_2$, and $c_3$ is $O(n^3)$ and for each $c_i$ we find the $q_i\in \mathcal{B}(c_i)$ such that $|\textbf{Gain}(c_i,q_i) \cap P_i|$ is maximized by  trying all points in $\mathcal{B}(c_i)$ (at most $n$ many).

For notation, define $\textbf{Guess}:= \cup_{i=1}^3 \mathcal{B}(c_i)$ and let 
\begin{align*}
    \tau =  |\textbf{Gain}(c_3,q_3)\cap P_3|.
\end{align*}
The important properties guaranteed by the first phase is summarized in the following lemma.  

\begin{lemma}\label{lemma2}
    Assuming that $c_1, c_2,$ and $c_3$ are guessed correctly, we have  that
    \begin{enumerate}
        \item the  $k-3$ balls  of radius one in $OPT \setminus \{c_i\}_{i=1}^3$ are contained in $P_4$ and cover $b - |B \cap \textbf{Guess}|$ blue points and $r - |R \cap \textbf{Guess}|$ red points; and
        \item  the three clusters $\mathcal{F}(q_1),\mathcal{F}(q_2)$, and $\mathcal{F}(q_3)$ are contained in $P \setminus P_4$ and cover at least $|B \cap \textbf{Guess}|$ blue points and at least $|R \cap \textbf{Guess}| + 3 \cdot \tau$ red points.
    \end{enumerate}
    \label{lem:phaseI}
\end{lemma}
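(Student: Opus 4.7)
My proof rests on a single geometric consequence of well-separatedness, after which both parts of the lemma reduce to bookkeeping plus the greedy maximality of Phase~I.

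\textbf{Geometric claim.} First I would establish that for every $c \in OPT \setminus \{c_i\}$ and every $i \in \{1,2,3\}$, $\mathcal{B}(c) \cap \mathcal{F}(q_i) = \emptyset$. Suppose for contradiction that $p$ lies in this intersection. Unfolding the flower, there is some $j \in \mathcal{B}(q_i)$ with $p \in \mathcal{B}(j)$, and two triangle inequalities give $d(c,j) \le d(c,p) + d(p,j) \le 2$ and $d(c_i,j) \le d(c_i,q_i) + d(q_i,j) \le 2$. Consequently $\mathcal{B}(j,3)$ would contain both $\mathcal{B}(c)$ and $\mathcal{B}(c_i)$, two distinct balls of $OPT$, contradicting well-separatedness. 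Applying the same style of inequality directly between two OPT balls also shows that the balls of $OPT$ are pairwise disjoint, a fact I will use repeatedly below.

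\textbf{Part 1.} The geometric claim implies that for every $c \in OPT \setminus \{c_1,c_2,c_3\}$ the ball $\mathcal{B}(c)$ is disjoint from each $\mathcal{F}(q_i)$, hence $\mathcal{B}(c) \subseteq P \setminus \bigcup_i \mathcal{F}(q_i) = P_4$. Since the $k$ balls of $OPT$ are pairwise disjoint and together cover at least $r$ red and $b$ blue points, and since $\bigcup_i \mathcal{B}(c_i) \subseteq \textbf{Guess}$, the three balls $\mathcal{B}(c_1),\mathcal{B}(c_2),\mathcal{B}(c_3)$ account for at most $|R \cap \textbf{Guess}|$ reds and at most $|B \cap \textbf{Guess}|$ blues. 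The remaining $k-3$ balls, all lying inside $P_4$, therefore cover at least $r - |R \cap \textbf{Guess}|$ reds and $b - |B \cap \textbf{Guess}|$ blues.

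\textbf{Part 2.} The inclusion $\mathcal{F}(q_i) \subseteq P \setminus P_4$ is immediate from the definition of $P_4$. Because $c_i \in \mathcal{B}(q_i)$ we have $\mathcal{B}(c_i) \subseteq \mathcal{F}(q_i)$, so $\textbf{Guess} \subseteq \bigcup_i \mathcal{F}(q_i)$, delivering the blue coverage of at least $|B \cap \textbf{Guess}|$. For the reds I split $R \cap \mathcal{F}(q_i) = (R \cap \mathcal{B}(c_i)) \sqcup \textbf{Gain}(c_i,q_i)$ disjointly, then invoke the geometric claim once more to obtain $\mathcal{B}(c_i) \cap \mathcal{F}(q_\ell) = \emptyset$ for $\ell < i$, so that $\mathcal{B}(c_i) \subseteq P_i$. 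Summing and using the disjointness of the OPT balls yields
\[
\sum_{i=1}^3 |R \cap \mathcal{F}(q_i) \cap P_i| \;=\; |R \cap \textbf{Guess}| \;+\; \sum_{i=1}^3 |\textbf{Gain}(c_i,q_i) \cap P_i|.
\]
The final ingredient is the greedy step of Phase~I, which picks $(c_i,q_i)$ to maximize $|\textbf{Gain}(c,q) \cap P_i|$ over candidate pairs with $c \in OPT$ and $q \in \mathcal{B}(c)$. Since $(c_3,q_3)$ is such a candidate at every step $i \le 3$ and $P_3 \subseteq P_i$,
\[
|\textbf{Gain}(c_i,q_i) \cap P_i| \;\ge\; |\textbf{Gain}(c_3,q_3) \cap P_i| \;\ge\; |\textbf{Gain}(c_3,q_3) \cap P_3| \;=\; \tau,
\]
so the red coverage is at least $|R \cap \textbf{Guess}| + 3\tau$, as claimed.

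\textbf{Main obstacle.} The delicate point is the geometric claim: well-separatedness forbids only radius-$3$ covers of two OPT balls, while the naive triangle inequalities routed through $q_i$ or $c_i$ alone yield radius $4$. The saving trick is to route through the intermediate point $j \in \mathcal{B}(q_i)$ supplied by the flower definition, which is simultaneously within unit distance of $p$ (hence within $2$ of $c$) and of $q_i$ (hence within $2$ of $c_i$). Everything else is careful accounting, together with the observation that each of the three greedy iterations can only improve on the value attained at step $3$, thereby ensuring the $3\tau$ slack.
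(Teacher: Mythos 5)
Your proof is correct and follows essentially the same route as the paper: the identical well-separatedness argument routed through an intermediate point of $\mathcal{B}(q_i)$ to show $OPT$ balls outside $\{c_1,c_2,c_3\}$ miss each flower, the decomposition $R\cap\mathcal{F}(q_i) = (R\cap\mathcal{B}(c_i))\sqcup\textbf{Gain}(c_i,q_i)$, and the greedy maximality giving $|\textbf{Gain}(c_i,q_i)\cap P_i|\ge\tau$. You in fact spell out details the paper leaves implicit (disjointness of the sets being summed, and why each iteration's gain dominates $\tau$ via $P_3\subseteq P_i$), so nothing further is needed.
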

\begin{proof}
1) We claim that the intersection of any ball of $OPT \setminus \{ c_i \}_{i=1}^3$ with $\mathcal{F}(q_i)$ in $P$ is empty, for all $1 \leq i \leq 3$. 
Then the $k-3$ balls in $OPT \setminus \{ c_i \}_{i=1}^3$ satisfy the statement of (1).
To prove the claim, suppose that there is $p \in OPT \setminus \{ c_i \}_{i=1}^3$ such that $\mathcal{B}(p) \cap \mathcal{F}(q_i) \neq \emptyset$ for some $1 \leq i \leq 3$.
Note that $\mathcal{F}(q_i) = \cup_{i\in   \mathcal{B}(q_i)} \mathcal{B}(i)$, so this implies that $\mathcal{B}(p) \cap \mathcal{B}(q') \neq \emptyset$, for some $q'\in \mathcal{B}(q_i)$.
Hence, a ball of radius three around $q'$ covers both $\mathcal{B}(p)$ and $\mathcal{B}(c_i)$ as $c_i \in \mathcal{B}(q_i)$, which contradicts that the instance is well-separated.

2) Note that for $1 \leq i \leq 3$, $\mathcal{B}(c_i) \cup \textbf{Gain}(c_i, q_i) \subseteq \mathcal{F}(q_i)$, and that $\mathcal{B}(c_i)$ and \textbf{Gain}($c_i, q_i$) are disjoint. 
The balls $\mathcal{B}(c_i)$ cover at least $|B \cap \textbf{Guess}|$ blue points and $|R \cap \textbf{Guess}|$ red points, while $\sum_{i=1}^3 |\textbf{Gain}(c_i, q_i) \cap P_i| \geq 3\tau$.
\end{proof}

\subsection{Phase II}
Throughout this section we assume $c_1, c_2$, and $c_3$ have been guessed correctly in Phase I so that the properties of Lemma~\ref{lem:phaseI} hold.
Furthermore, by the selection and the definition of $\tau$, we also have 
\begin{align}
    |\mathbf{Gain}(p, q) \cap P_4| \leq \tau \qquad \mbox{for any $p\in P_4 \cap OPT$ and $q\in \mathcal{B}(p) \cap P_4$. }
    \label{eq:gain_bound}
\end{align}
This implies that  $\mathcal{F}(p) \setminus \mathcal{B}(p)$ contains at most $\tau$ red points of $P_4$.
However, to apply Lemma~\ref{lem:pseudo_approx} we need that the number of red points of $P_4$ in the whole flower $\mathcal{F}(p)$ is bounded.  
To deal with balls with many more than $\tau$ red points, we will iteratively remove \textit{dense} sets from $P_4$ to obtain a subset $P_s$ of \textit{sparse} points. 
\begin{definition}\label{denseset}
When considering a subset of the points $P_s \subseteq P$, we say that a point $j\in P_s$ is \textbf{dense} if the ball $\mathcal{B}(j)$ contains strictly more than $2\cdot\tau$ red points of $P_s$. For a dense point $j$, we also let $I_j \subseteq P_s$ contain those points $i \in P_s$ whose intersection $\mathcal{B}(i) \cap \mathcal{B}(j)$ contains strictly more than $\tau$ red points of $P_s$. 
\end{definition}
We remark that in the above definition, we have in particular that $j \in I_j$ for a dense point $j\in P_s$. Our iterative procedure now works as follows:
\begin{center}
\begin{minipage}{0.99\textwidth}
\begin{mdframed}[hidealllines=true, backgroundcolor=gray!15]
    Initially, let $I = \emptyset$ and $P_s = P_4$. 
    While there is a dense point $j\in P_s$:
    \begin{itemize}
        \item Add $I_j$ to $I$ and update $P_s$ by removing the points $D_j = \cup_{i \in I_j} \mathcal{B}(i) \cap P_s$.
    \end{itemize}
\end{mdframed}
\end{minipage}
\end{center}
Let $P_d = P_4 \setminus P_s$ denote those points that were removed from $P_4$. We will cluster the two sets  $P_s$ and $P_d$ of points separately. Indeed, the following lemma says that a center in $OPT \setminus \{c_i\}_{i=1}^3$ either covers points in $P_s$ or $P_d$ but not points from both sets.  Recall that  $D_j$ denotes the set of points that are removed from $P_s$ in the iteration when $j$ was selected  and so $P_d = \cup_j D_j$.
\begin{lemma}\label{partitionlemma}
For any $c\in$ $OPT\setminus \{c_i\}_{i=1}^3$ and any $I_j\in I$, either $c \in I_j$ or $\mathcal{B}(c)\cap D_j = \emptyset$.
\end{lemma}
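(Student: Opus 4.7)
My plan is to establish the stated dichotomy by proving the contrapositive: whenever $\mathcal{B}(c) \cap D_j \neq \emptyset$, I show $c \in I_j$. The engine will be two applications of the gain bound \eqref{eq:gain_bound} at the point $c \in OPT \cap P_4$, chained through a carefully chosen red witness point.

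Pick any $p \in \mathcal{B}(c) \cap D_j$; by the definition of $D_j$ there is some $i \in I_j$ with $p \in \mathcal{B}(i) \cap P_s$. Since $p \in \mathcal{B}(c) \cap P_s \subseteq \mathcal{B}(c) \cap P_4$, applying \eqref{eq:gain_bound} with $q = p$ gives $|(\mathcal{F}(p) \setminus \mathcal{B}(c)) \cap R \cap P_4| \leq \tau$. Because $i \in \mathcal{B}(p)$ we have $\mathcal{B}(i) \subseteq \mathcal{F}(p)$, and the defining property $i \in I_j$ guarantees $|\mathcal{B}(i) \cap \mathcal{B}(j) \cap R \cap P_s| > \tau$. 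Therefore at least one of these red points, call it $x$, must lie in $\mathcal{B}(c)$---otherwise all of them would live in $(\mathcal{F}(p) \setminus \mathcal{B}(c)) \cap R \cap P_4$, violating the gain bound. In particular $x \in \mathcal{B}(c) \cap \mathcal{B}(j) \cap R \cap P_s$.

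I now reapply \eqref{eq:gain_bound} with $q = x$: since $j \in \mathcal{B}(x)$ we have $\mathcal{B}(j) \subseteq \mathcal{F}(x)$, and the denseness of $j$ gives $|\mathcal{B}(j) \cap R \cap P_s| > 2\tau$, while the gain bound yields $|(\mathcal{F}(x) \setminus \mathcal{B}(c)) \cap R \cap P_4| \leq \tau$. Subtracting shows $|\mathcal{B}(c) \cap \mathcal{B}(j) \cap R \cap P_s| > 2\tau - \tau = \tau$, which is precisely the defining condition for $c \in I_j$, provided $c$ is still in $P_s$ at this iteration.

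The one remaining piece of bookkeeping is confirming $c \in P_s$ at the start of iteration $j$. I would handle this by considering the first iteration $j^\ast$ (in the order the algorithm processes them) for which $\mathcal{B}(c) \cap D_{j^\ast} \neq \emptyset$: by minimality $c \in \mathcal{B}(c)$ has not been removed before $j^\ast$, so $c \in P_s$ at $j^\ast$ and the two-step argument above gives $c \in I_{j^\ast}$. But $c \in I_{j^\ast}$ forces $\mathcal{B}(c) \cap P_s \subseteq D_{j^\ast}$ at that iteration, so $\mathcal{B}(c) \cap P_s$ is empty from $j^\ast + 1$ onwards and the dichotomy continues to hold trivially for later iterations. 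I expect the substantive difficulty to lie in the first application of the gain bound---producing a red witness inside $\mathcal{B}(c) \cap \mathcal{B}(j)$---since once that witness exists, denseness together with the second gain bound closes the argument almost automatically.
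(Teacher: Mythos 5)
Your proof is correct and is essentially the paper's argument in contrapositive form: the paper assumes $c \notin I_j$ and $\mathcal{B}(c)\cap D_j \neq \emptyset$ and splits into the cases $\mathcal{B}(c)\cap\mathcal{B}(j)\neq\emptyset$ and $\mathcal{B}(c)\cap\mathcal{B}(j)=\emptyset$, applying the gain bound \eqref{eq:gain_bound} once per case exactly where you apply it in your two chained steps (first through $\mathcal{B}(i)$ to produce a point of $\mathcal{B}(c)\cap\mathcal{B}(j)$, then through $\mathcal{B}(j)$ using the denseness of $j$). Your explicit bookkeeping of the iteration ordering to guarantee $c\in P_s$ when testing membership in $I_j$ is a detail the paper leaves implicit, but it does not change the substance of the argument.
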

\begin{proof}
Let $c\in OPT\setminus \{c_i\}_{i=1}^3$, $I_j\in I$, and suppose $c \notin I_j$.
If $\mathcal{B}(c) \cap D_j \neq \emptyset$, there is a point $p$ in the intersection $\mathcal{B}(c) \cap \mathcal{B}(i)$ for some $i \in I_j$. 
Suppose first that $\mathcal{B}(c) \cap \mathcal{B}(j) \neq \emptyset$. Then, since $c \notin I_j$, the intersection $\mathcal{B}(c) \cap \mathcal{B}(j)$ contains fewer than $\tau$ red points from $D_j$ (recall that $D_j$ contains the points of $\mathcal{B}(j)$ in $P_s$ at the time $j$ was selected). 
But by the definition of dense clients, $\mathcal{B}(j)\cap D_j$ has more than $2 \cdot \tau$  red points, so $(\mathcal{B}(j) \setminus \mathcal{B}(c)) \cap D_j $ has more than $\tau$ red points.
This region is a subset of $\mathbf{Gain}(c,p) \cap P_4$, which contradicts~\eqref{eq:gain_bound}.
This is shown in Figure \ref{fig-lemma2}(a).
Now consider the second case when $\mathcal{B}(c) \cap \mathcal{B}(j) = \emptyset$ and there is a point $p$ in the intersection $\mathcal{B}(c) \cap \mathcal{B}(i)$ for some $i\in I_j$ and $i \neq j$. Then, by the definition of $I_j$, $\mathcal{B}(i) \cap \mathcal{B}(j)$ has more than $\tau$ red points of $D_j$. However, this  is also a subset of $\mathbf{Gain}(c,p) \cap P_4$ so we reach the same contradiction.
See Figure \ref{fig-lemma2}(b).
\end{proof}
\colorlet{mag1}{magenta!10}
\colorlet{mag2}{magenta!40}
\begin{figure}[t]
\begin{center}
\begin{tikzpicture}[scale=0.4]
\draw [fill=mag2] (0, 2) circle [radius=2];
\draw (3, 2) circle [radius=2];

\begin{scope}
  \clip (3, 2) circle [radius=2];
  \fill[white] (0, 2) circle [radius=2];
\end{scope}
\draw [fill] (0, 2) circle [radius=0.05];
\draw [fill] (1.5, 2) circle [radius=0.05];
\draw [fill] (3, 2) circle [radius=0.05];

\node at (3, 1.6) {$c$};
\node at (0, 1.6) {$j$};
\node at (1.5, 1.6) {$p$};

\draw [fill=mag1](13.5, 2) circle [radius=2];
\draw (16, 2) circle [radius=2];
\begin{scope}
  \clip (13.5, 2) circle [radius=2];
  \fill[mag2] (11, 2) circle [radius=2];
\end{scope}
\begin{scope}
  \clip (16, 2) circle [radius=2];
  \fill[white] (13.5, 2) circle [radius=2];
\end{scope}
\draw (11, 2) circle [radius=2];

\draw [fill] (11, 2) circle [radius=0.05];
\draw [fill] (13.5, 2) circle [radius=0.05];
\draw [fill] (16, 2) circle [radius=0.05];
\draw [fill] (14.5, 2) circle [radius=0.05];
\node at (16, 1.6) {$c$};
\node at (11, 1.6) {$j$};
\node at (13.5, 1.6) {$i$};
\node at (14.5, 1.6) {$p$};

\node at (-3,4) {(a)};
\node at (8,4) {(b)};

\end{tikzpicture}
\end{center}
\caption{The shaded regions are subsets of \textbf{Gain}(c,p), which contain the darkly shaded regions that have $> \tau$ red points.} \label{fig-lemma2}
\end{figure}
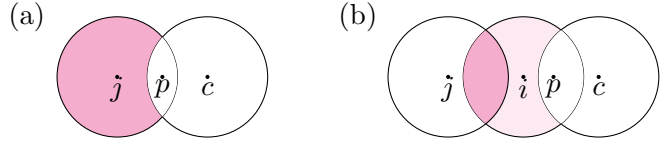

Our algorithm now proceeds by guessing the number $k_d$ of balls of $OPT \setminus \{c_i\}_{i=1}^3$ contained in $P_d$. We also guess the numbers $r_d$ and $b_d$ of red and blue points, respectively, that these balls cover in $P_d$.
Note that after guessing $k_d$, we know that the number of balls in  $OPT\setminus \{c_i\}_{i=1}^3$ contained in $P_s$ equals $k_s = k- 3 - k_d$. Furthermore, by the first property of Lemma~\ref{lem:phaseI}, these balls cover at least $b_s = b - |B \cap \mathbf{Guess}| - b_d$ blue points in $P_s$ and at least $r_s = r - |R \cap \mathbf{Guess}| - r_d$ red points in $P_s$. As there are $O(n^3)$ possible values of $k_d, b_d$, and $r_d$ (each can take a value between $0$ and $n$) we can try all possibilities by  increasing the running time by a multiplicative factor of $O(n^3)$. Henceforth, we therefore assume that we have guessed those parameters correctly. In that case, we show that we can recover an equally good solution for $P_d$ and a solution for $P_s$ that covers $b_s$ blue points and almost $r_s$ red points:
\begin{lemma}\label{algorithms}
    There exist two polynomial-time algorithms $\mathcal{A}_d$ and $\mathcal{A}_s$ such that if $k_d, r_d$, and $b_d$ are guessed correctly  then
    \begin{itemize}
        \item $\mathcal{A}_d$ returns $k_d$ balls of radius one that cover  $b_d$ blue points  of $P_d$ and $r_d$ red points of $P_d$;
        \item $\mathcal{A}_s$ returns $k_s$ balls of radius two that cover at least $b_s$ blue points of $P_s$ and at least $r_s - 3\cdot \tau$ red points of $P_s$.
    \end{itemize}
    \label{lem:twoalgs}
\end{lemma}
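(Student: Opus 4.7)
My plan is to construct $\mathcal{A}_d$ and $\mathcal{A}_s$ separately, leveraging the structural properties established in Phase I and Phase II together with the well-separated assumption.

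For $\mathcal{A}_d$, the first step is a structural claim: each $I_j \in I$ contains at most one OPT center. Indeed, $i \in I_j$ implies $\mathcal{B}(i)\cap\mathcal{B}(j)\neq\emptyset$, so $d(i,j)\leq 2$, and any two OPT members of $I_j$ would have their radius-one balls simultaneously covered by a single radius-three ball around $j$, contradicting well-separated-ness. Second, I would show that each $c\in OPT\setminus\{c_i\}_{i=1}^{3}$ falls into exactly one of two cases: either $c$ lies in a unique $I_j$ and $\mathcal{B}(c)\subseteq D_j$, or $c$ lies in no $I_j$ and $\mathcal{B}(c)\subseteq P_s$. The uniqueness follows from the fact that once the iterative procedure selects $j$ and removes $D_j$, the point $i$ itself is removed for every $i\in I_j$; the containment $\mathcal{B}(c)\subseteq D_j$ then follows from Lemma~\ref{partitionlemma} applied to every $I_{j'}$ with $j'\neq j$. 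This partitions the $k-3$ remaining OPT centers into $k_d$ balls supported on the disjoint $D_j$'s and $k_s$ balls supported on $P_s$. Algorithm $\mathcal{A}_d$ then reduces to a subset-selection problem: from each $I_j$ pick zero or one center, using exactly $k_d$ centers and covering at least $r_d$ red and $b_d$ blue points of $P_d$. Because the coverages attached to distinct $I_j$'s are disjoint (since $\mathcal{B}(c)\cap P_d\subseteq D_j$ for any chosen $c\in I_j$), a four-dimensional dynamic program with state (component index, centers used so far, red covered, blue covered) solves this in polynomial time, and OPT witnesses feasibility when $k_d, r_d, b_d$ are guessed correctly.

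For $\mathcal{A}_s$, I would run the pseudo-approximation of Lemma~\ref{lem:pseudo_approx} on LP1 restricted to $P_s$ with parameters $k_s, r_s, b_s$, after augmenting LP1 by the side constraints $z_j = 0$ for every $j \in P_s$ with $|\mathcal{F}(j)\cap R\cap P_s| > 3\tau$. The nontrivial point is verifying that this augmented LP is still feasible, which I would do by exhibiting the OPT-restricted integral solution: set $x_c = 1$ for each of the $k_s$ OPT centers $c$ with $\mathcal{B}(c)\subseteq P_s$, and $z_j = 1$ for every $j$ covered by some such $c$. For any such $j$, pick a witness $c \in OPT \cap P_s$ with $j\in\mathcal{B}(c)$; then $c\in\mathcal{B}(j)\cap P_4$, and decomposing $\mathcal{F}(j)=\mathcal{B}(c)\cup(\mathcal{F}(j)\setminus\mathcal{B}(c))$ yields
\[
|R\cap\mathcal{F}(j)\cap P_s| \;\leq\; |R\cap\mathcal{B}(c)\cap P_s| + |\mathbf{Gain}(c,j)\cap P_4| \;\leq\; 2\tau + \tau \;=\; 3\tau,
\]
where the first summand uses that $c\in P_s$ is not dense at termination and the second uses~\eqref{eq:gain_bound}. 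Hence every $j$ with $z_j>0$ in this integral solution satisfies the added constraint, so the augmented LP is feasible, and Lemma~\ref{lem:pseudo_approx} produces $k_s$ radius-two flowers covering at least $b_s$ blue and at least $r_s - 3\tau$ red points of $P_s$.

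I expect the main obstacle to be the structural analysis underpinning $\mathcal{A}_d$: ruling out that $\mathcal{B}(c)$ touches any $D_{j'}$ with $j'\neq j$, and then using the definition of $D_j$ together with the uniqueness of the $I_j$ containing $c$ to fit all of $\mathcal{B}(c)$ inside a single $D_j$, takes careful bookkeeping with the iterative construction. Once this clean dense/sparse partition of OPT is in place, $\mathcal{A}_d$ becomes a standard multi-dimensional DP, and $\mathcal{A}_s$ is an essentially direct invocation of Lemma~\ref{lem:pseudo_approx} after the LP augmentation.
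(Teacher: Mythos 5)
Your proposal is correct and follows essentially the same route as the paper's proof: the same well-separatedness argument that each $I_j$ holds at most one $OPT$ center, the same use of Lemma~\ref{partitionlemma} to confine each remaining $OPT$ ball to a single $D_j$ or to $P_s$, the same multi-dimensional dynamic program for the dense part, and the same $2\tau + \tau$ bound (non-density plus the gain bound~\eqref{eq:gain_bound}) justifying the added constraints $z_j = 0$ before invoking Lemma~\ref{lem:pseudo_approx}. The only differences are cosmetic, e.g.\ you argue the $3\tau$ bound constructively where the paper argues by contradiction.
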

\begin{proof}
We first describe and analyze the algorithm $\mathcal{A}_d$ followed by $\mathcal{A}_s$.

\paragraph{The algorithm $\mathcal{A}_d$ for the dense point set $P_d$.} By Lemma~\ref{partitionlemma}, we have that all $k_d$ balls in $OPT \setminus \{c_i\}_{i=1}^3$ that cover points in $P_d$ are centered at points in $\cup_{j} I_j$. Furthermore, we have that each $I_j$ contains at most one center of $OPT$. This is because every $i \in I_j$ is such that $\mathcal{B}(i) \cap \mathcal{B}(j) \neq \emptyset$ and so, by the triangle inequality, $\mathcal{B}(j,3)$ contains all balls $\{\mathcal{B}(i)\}_{i\in I_j}$.
Hence, by the assumption that the instance is well-separated,  the set $I_j$ contains at most one center of $OPT$. 

We now reduce our problem to a $3$-dimensional subset-sum problem.
For each $I_j \in I$,  form a group consisting of an item for each $p\in I_j$. The item corresponding to $p\in I_j$ has the $3$-dimensional value vector $(1, |\mathcal{B}(p) \cap D_j \cap B|, |\mathcal{B}(p) \cap D_j \cap R|)$.  
Our goal is to find $k_d$ items such that at most one item per group is selected and their $3$-dimensional vectors sum up to $(k_d, b_d, r_d)$. 
Such a solution, if it exists, can be found by standard dynamic programming that has a table of size $O(n^4)$. For completeness, we provide the recurrence and precise details of this standard technique in Appendix~\ref{sec:dynamic_programming}.
Furthermore, since the $D_j$'s are disjoint by definition, this gives $k_d$ centers that cover $b_d$ blue points and $r_d$ red points in $P_d$, as required in the statement of the lemma. 

It remains to show that such a solution exists. 
Let $o_1, o_2, \ldots, o_{k_d}$ denote the centers  of the balls in $OPT \setminus \{c_i\}_{i=1}^3$ that cover points in $P_d$. Furthermore, let $I_{j_1}, \ldots, I_{j_{k_d}}$ be the sets in $I$ such that $o_i \in I_{j_i}$ for $i\in \{1,\ldots, k_d\}$. 
Notice that by Lemma~\ref{partitionlemma} we have that $\mathcal{B}(o_i) \cap P_d$ is disjoint from $P_d \setminus D_{j_i}$ and  contained in $D_{j_i}$. It follows that the $3$-dimensional vector corresponding to an $OPT$ center $o_i$ equals $(1, |\mathcal{B}(p) \cap P_d \cap B|, |\mathcal{B}(p) \cap P_d \cap R|)$. Therefore, the sum of these vectors corresponding to $o_1, \ldots, o_{k_d}$ results in the vector $(k_d, b_d, r_d)$, where we used that our guesses of $k_d, b_d$, and $r_d$ were correct.

\paragraph{The algorithm $\mathcal{A}_s$ for the sparse point set $P_s$.} Assuming that the guesses are correct we have that $OPT \setminus \{c_i\}_{i=1}^3$ contains $k_s$ balls that cover $b_s$ blue points of $P_s$ and $r_s$ red points of $P_s$. Hence, LP1 has a feasible solution $(x, z)$ to the instance defined by the point set $P_s$, the number of balls $k_s$, and the constraints $b_s$ and $r_s$ on the number of blue and red points to be covered, respectively. Lemma~\ref{lem:pseudo_approx} then says that we can in polynomial-time find $k_s$ balls of radius two such that at least $b_s$ blue balls of $P_s$ are covered and at least  
\begin{align*}
    r_s - \max_{j: z_j >0}  | \mathcal{F}(j) \cap R|
\end{align*}
red points of $P_s$ are covered. Here, $\mathcal{F}(j)$ refers to the flower restricted to the point set $P_s$.  

To prove the the second part of  Lemma~\ref{lem:twoalgs}, it is thus sufficient to show that LP1 has a feasible solution where $z_j = 0$ for all $j\in P_s$ such that $| \mathcal{F}(j) \cap R| > 3\cdot \tau$. In turn, this follows by showing that, for any such $j\in P_s$ with $|\mathcal{F}(j) \cap R| > 3 \cdot \tau$, no point in $\mathcal{B}(j)$ is in $OPT$ (since then $z_j = 0$ in the integral solution corresponding to $OPT$). Such a feasible solution can be found by adding $x_i=0\hspace{0.1cm} \forall i\in \mathcal{B}(j)$ for all such $j$ to LP1.

To see why this holds, suppose towards a contradiction that there is a $c\in OPT$ such that $c\in \mathcal{B}(j)$. First, since there are no dense points in $P_s$, we have that the number of red points in $\mathcal{B}(c) \cap P_s$ is at most $2 \cdot \tau$. Therefore the number of red points of $P_s$ in $\mathcal{F}(j) \setminus \mathcal{B}(c)$ is strictly more than $\tau$. In other words, we have $\tau < |\mathbf{Gain}(c, j) \cap P_s| \leq |\mathbf{Gain}(c, j) \cap P_4|$ which contradicts~\eqref{eq:gain_bound}.
\end{proof}
Equipped with the above lemma we are now ready to finalize the proof of Theorem~\ref{thm:2approx}.

\begin{proof}[Proof of Theorem~\ref{thm:2approx}]
    Our algorithm guesses the optimal radius and the centers $c_1, c_2, c_3$ in Phase I, and $k_d, r_d, b_d$ in Phase II. There are at most $\binom{n}{2}$ choices of the optimal radius, $n$ choices for each $c_i$, and $n+1$ choices of $k_d,r_d, b_d$ (ranging from $0$ to $n$). We can thus try all these possibilities in polynomial time and, since all other steps in our algorithm run in polynomial time, the total running time will be polynomial. 
    The algorithm tries all these guesses and outputs the best solution found over all choices. For the correct guesses, we output a solution with $3+ k_d + k_s = k$ balls of radius at most two. Furthermore, by the second property of Lemma~\ref{lem:phaseI} and the two properties of Lemma~\ref{lem:twoalgs}, we have that
    \begin{itemize}\itemsep2mm
        \item the number of blue points covered is at least $|B \cap \mathbf{Guess}| + b_d + b_s = b$; and
        \item the number of red points covered is at least $|R \cap \mathbf{Guess}| + 3 \tau + r_d + r_s - 3 \tau = r$.
    \end{itemize}
    We have thus given a polynomial-time algorithm that returns a solution where the balls are of radius at most twice the optimal radius.
\end{proof}
\section{Constant Number of Colors} \label{mult-colors}

Our algorithm extends easily to a constant number $\omega$ of color classes $\C_1, \dots, \C_{\omega}$ with coverage requirements $p_1, \dots, p_{\omega}$.
We use the LPs in Fig. \ref{fig:LPs-omega} for a general number of colors, where $p_{j,i}$ in LP2$(\omega)$ indicates the number of points of color class $i$ in cluster $j \in S$.
$S$ is the set of cluster centers obtained from modified clustering algorithm in Appendix \ref{clusteringproof} to instances with $\omega$ color classes.
LP2$(\omega)$ has only $\omega$ non-trivial constraints, so any extreme point has at most $\omega$ variables attaining strictly fractional values, and a feasible solution attaining objective value at least $p_1$ will have at most $k+\omega-1$ positive values.
By rounding up to 1 the fractional value of the center that contains the most number of points of $\C_{\omega}$, we can cover $p_{\omega}$ points of $\C_{\omega}$.
We would like to be able to close the remaining fractional centers, so we apply an analogous procedure to the case with just two colors.

\begin{figure}[t]
\fbox{
    \begin{minipage}[t][5.0cm][t]{0.45\textwidth}
        \begin{center}
        \vspace{2mm}
        \textbf{LP1$(\omega)$} 
        \vspace{-4mm}\end{center}
\begin{align*}
     \sum_{m \in \mathcal{B}(i)} x_m &\geq z_i, \quad \forall i \in P \\
     \sum_{i \in P} x_i &\leq k \\
     \sum_{i \in C_j} z_i &\geq p_j,  \quad  \forall  1 \leq j \leq \omega \\
     z_i, x_i & \in [0,1], \quad \forall i \in P.
\end{align*}
        \end{minipage}
    }
    \begin{minipage}{0.1\textwidth}
    \end{minipage}
\fbox{
    \begin{minipage}[t][5.0cm][t]{0.45\textwidth}
    \begin{center}
        \vspace{2mm}
        \textbf{LP2$(\omega)$} 
        \vspace{-4mm}\end{center}
    \begin{align*}
        \text{maximize} \quad  \sum_{i \in S} p_{1,i} y_{i} &\\[2mm]
        \text{subject to} \quad \ \  \sum_{i\in S} p_{j,i} y_i &\geq p_j,  \quad \forall 2 \leq j \leq \omega \\
        \sum_{i\in S} y_i& \leq k, \\
        y_i  \in &[0,1] \qquad \forall i\in S.
    \end{align*}
    \end{minipage}
}
\caption{Linear programs for $\omega$ color classes.}
    \label{fig:LPs-omega}
\end{figure}

We can guess $3(\omega-1)$ centers of $OPT$ for each of the $\omega-1$ colors whose coverage requirements are to be satisfied.
Then we bound the number of points of each color that may be found in a cluster, by removing dense sets that contain too many points of any one color and running a dynamic program on the removed sets.
The final step is to run the clustering algorithm of \cite{DBLP:conf/esa/Bandyapadhyay0P19} on the remaining points, and rounding to one the fractional center with the most number of points of $\C_1$, and closing all other fractional centers.

In particular, we get a running time with a factor of $n^{O(\omega^2)}$.
The remainder of this section gives a formal description of the algorithm for $\omega$ color classes.

\subsection{Formal Algorithm for $\omega$ colors}

The following is a natural generalization of Lemma \ref{lem:pseudo_approx} and summarizes the main properties of the clustering algorithm of Appendix \ref{clusteringproof} for instances with $\omega$ color classes.

\begin{custom_lemma}{1$'$}
Given a fractional solution $(x,z)$ to LP1$(\omega)$, there is a polynomial-time algorithm that outputs at most $k$ clusters of radius two that cover at least $p_{\omega}$ points of $\C_{\omega}$, and at least $p_i - (\omega - 1)\max_{j:z_j>0} |\mathcal{F}(j) \cap \C_i|$ for $2 \leq i \leq \omega$.
\end{custom_lemma}

Since we may not meet the coverage requirements for $\omega-1$ color classes, it is necessary to guess some balls of $OPT$ for each of those colors, and for each fractional center.
In total we guess $3(\omega-1)^2$ points of $OPT$ as follows: 
\begin{center}
\begin{minipage}{0.95\textwidth}
\begin{mdframed}[hidealllines=true, backgroundcolor=gray!15]
 For $j= 2, \dots, \omega$, for $i = 1, 2, \dots, 3(\omega-1)$ guess the center $c_{j,i}$ in $OPT$ and calculate the point $q_{j,i} \in \mathcal{B}(c_{j,i})$ such that $| \C_j \cap \textbf{Gain}(c_{j,i}, q_{j,i}) \cap P_{j,i}|$ is maximized over all possible $c_{j,i}$, where
\begin{align*}
    P_{j,1} &= P \\
    P_{j,i} &= P_{j, i-1} \setminus \left( \C_i \cap \mathcal{F}(q_{j, i-1}) \right) \mbox{\,\,\,\, for } 2 \leq i \leq 3(\omega-1) + 1.
\end{align*}
\end{mdframed}
\end{minipage}
\end{center}

This guessing takes $O(n^{3(\omega-1)^2})$ rounds.
It is possible that some $c_{j,i}$ coincide, but this does not affect the correctness of the algorithm.
In fact, this can only improve the solution, in the sense that the coverage requirements will be met with fewer than $k$ centers.
Let $k_c$ denote the number of distinct $c_{j,i}$ obtained in the correct guess.
For notation, define
\begin{align*}
    \textbf{Guess} :&= \cup_{j=2}^{\omega} \cup_{i=1}^{3(\omega-1)} \mathcal{B}(c_{j,i}) \\
    \tau_{j} &= \big\vert \C_j\cap \textbf{Gain}(c_{j,3(\omega-1)},q_{j,3(\omega-1)})\cap P_{j,3(\omega-1)} \big\vert.
\end{align*}
To be consistent with previous notation, let
\begin{align*}
    P_4 := P \setminus \cup_{j=2}^{\omega}\cup_{i=1}^{3(\omega-1)}\mathcal{F}(q_{j,i}).
\end{align*}
The important properties guaranteed by the first phase can be summarized in the following lemma whose proof is the natural extension of Lemma \ref{lemma2}.  

\begin{custom_lemma}{2$'$}
    Assuming that $c_{j,i}$ are guessed correctly, we have that
    \begin{enumerate}
        \item the  $k-3(\omega-1)^2$ balls  of radius one in $OPT \setminus \cup_{j=2}^{\omega} \cup_{i=1}^{3(\omega-1)} \{c_{j,i}\}$ are contained in $P_4$ and cover $p_{\omega} - |\C_{\omega} \cap \textbf{Guess}|$ of points in $\C_{\omega}$ and $p_j - |\C_j \cap \textbf{Guess}|$ points of $\C_j$ for $j=2, \dots, \omega$; and
        \item  the clusters $\mathcal{F}(q_{j,i})$ are contained in $P \setminus P_{3(\omega-1) + 1}$ and cover at least $|\C_{\omega} \cap \textbf{Guess}|$ points of $\C_{\omega}$ and at least $|\C_{j} \cap \textbf{Guess}| + 3(\omega-1) \cdot \tau_{j}$ points of $\C_j$.
    \end{enumerate}
\end{custom_lemma}
Now we need to remove points which contain many points from any one of the color classes to partition the instance into dense and sparse parts which leads to the following generalized definition of dense points. 
\begin{custom_def}{4$'$}
When considering a subset of the points $P_s \subseteq P$, we say that a point $p\in P_s$ is $j$-\textbf{dense} if $|\C_j \cap \mathcal{B}(p) \cap P_s| > 2\tau_j$. For a $j$-dense point $p$, we also let $I_p \subseteq P_s$ contain those points $i \in P_s$ such that $|\C_j \cap \mathcal{B}(i) \cap \mathcal{B}(p) \cap P_s| > \tau_j$ , \textbf{for every $2 \leq j \leq \omega$}.
\end{custom_def}
Now we perform a similar iterative procedure as for two colors:
\begin{center}
\begin{minipage}{0.99\textwidth}
\begin{mdframed}[hidealllines=true, backgroundcolor=gray!15]
    Initially, let $I = \emptyset$ and $P_s = P_{3(\omega-1)}$. 
    While there is a $j$-dense point $p\in P_s$ for any $2\leq j \leq \omega$:
    \begin{itemize}
        \item Add $I_p$ to $I$ and update $P_s$ by removing the points $D_p = \cup_{i \in I_p} \mathcal{B}(i) \cap P_s$.
    \end{itemize}
\end{mdframed}
\end{minipage}
\end{center}

As in the case of two colors, set $P_d = P_{3(\omega-1)} \setminus P_s$. By naturally extending Lemma  \ref{partitionlemma} and its proof, we can ensure that any ball of $OPT \setminus \cup_{j=2}^{\omega}\cup_{i=1}^{3(\omega-1)}\{ c_{j,i} \}$ is completely contained in either $P_d$ or $P_s$.
We guess the number $k_d$ of such balls of $OPT$ contained in $P_d$, and guess the numbers $d_1, \dots, d_{\omega}$ of points of $\C_1, \dots, \C_{\omega}$ covered by these balls in $P_d$.
There are $O(n^{\omega+1})$ possible values of $k_d, d_1, \dots, d_{\omega}$ and all the possibilities can be tried by increasing the running time by a multiplicative factor.
The number of balls of $OPT \setminus \cup_{j=2}^{\omega} \cup_{i=1}^{3(\omega-1)} \{c_{j,i}\}$ contained in $P_s$ is given by $k_s = k - k_c - k_d$ and these balls cover at least $s_j = p_j - |\C_j \cap \textbf{Guess}_{all}| - d_j$ points of $\C_j$ in $P_s$, $1 \leq j \leq \omega$.

Assuming that the parameters are guessed correctly we can show, similar to Lemma \ref{algorithms}, that the following holds.

\begin{custom_lemma}{4$'$}
    There exist two polynomial-time algorithms $\mathcal{A'}_d$ and $\mathcal{A'}_s$ such that if $k_d, d_1, \dots d_{\omega}$ are guessed correctly then
    \begin{itemize}
        \item $\mathcal{A'}_d$ returns $k_d$ balls of radius one that cover $d_1, \dots, d_{\omega}$ points of $\C_1, \dots, \C_{\omega}$ of $P_d$;
        \item $\mathcal{A'}_s$ returns $k_s$ balls of radius two that cover at least $s_{1}$ points of $\C_{1}$ of $P_s$ and at least $s_j - 3(\omega - 1)\cdot \tau_{j}$ points of $\C_j$ of $P_s$, $2\leq j \leq \omega$.
    \end{itemize}
\end{custom_lemma}

The algorithm $\mathcal{A'}_d$ proceeds as did $\mathcal{A}_d$, with the modification that the dynamic program is now $(\omega+1)$-dimensional.
Algorithm $\mathcal{A'}_s$, is also similar to $\mathcal{A}_s$, because LP1 has a feasible solution where $z_p=0$ for all $p \in P_s$ such that $|\mathcal{F}(p) \cap \C_{j}| > 3\tau_{j}$ holds for any $2 \leq j \leq \omega$.
Hence, we output a solution with $k_c + k_d + k_s = k$ balls of radius at most two, and
\begin{itemize}
    \item the number of points of $\C_{1}$ covered is at least $|\C_{1} \cap \textbf{Guess}| + d_{1} + s_{1} = p_{1}$; and
    \item the number of points of $\C_{j}$ covered is at least $|\C_j \cap \textbf{Guess}| + 3(\omega - 1)\tau_{j} + d_{j} + s_{j} - 3(\omega - 1)\tau_{j} = p_j$, for all $j=2, \dots, \omega$. 
\end{itemize}
This is a polynomial-time algorithm for colorful $k$-center with a constant number of color classes.

\section{LP Integrality Gaps}

In this section, we present two natural ways to strengthen LP1 and show that they both fail to close the integrality gap, providing evidence that clustering and knapsack feasibility cannot be decoupled in the colorful $k$-center problem.
On one hand, the Sum-of-Squares hierarchy is ineffective for knapsack problems, while on the other hand, adding knapsack constraints to LP1 is also insufficient due to the clustering aspect of this problem.

\subsection{Sum-of-Squares Integrality Gap}\label{SoS}

The Sum-of-Squares hierarchy (equivalently Lasserre \cite{lasserre2001explicit, lasserre2001global}) is a method of strengthening linear programs that has been used in constraint satisfaction problems, set-cover, and graph coloring, to just name a few examples \cite{DBLP:conf/approx/AroraG11, DBLP:journals/corr/abs-1204-5489, DBLP:conf/stoc/Tulsiani09}.
We use the same notation for the Sum-of-Squares hierarchy, abbreviated as SoS, as in Karlin et al. \cite{DBLP:conf/ipco/KarlinMN11}.
For a set $V$ of variables, $\mathcal{P}(V)$ are the power sets of $V$ and $\mathcal{P}_t(V)$ are the subsets of $V$ of size at most $t$. Their succinct definition of the hierarchy makes use of the \textit{shift operator}: for two vectors $x, y \in \mathbb{R}^{\mathcal{P}(V)}$ the \textbf{shift operator} is the vector $x * y \in \mathbb{R}^{\mathcal{P}(V)}$ such that
\begin{align*}
    (x * y)_I = \sum_{J \subseteq V} x_J y_{I \cup J}.
\end{align*}
Analogously, for a polynomial $g(x) = \sum_{I \subseteq V} a_I \prod_{i \in I} x_i$ we have $(g*y)_I = \sum_{J \subseteq V} a_J y_{I \cup J}$.
In particular, we work with the linear inequalities $g_1, \dots, g_m$ so that the polytope to be lifted is
\begin{align*}
    K = \{x \in [0,1]^n : g_{\ell}(x) &\geq 0 \mbox{ for } \ell = 1, \dots, m \}.
\end{align*}
Let $\mathcal{T}$ be a collection of subsets of $V$ and $y$ a vector in $\mathbb{R}^{\mathcal{T}}$.
The matrix $M_{\mathcal{T}}(y)$ is indexed by elements of $\mathcal{T}$ such that
\begin{align*}
    (M_{\mathcal{T}}(y))_{I, J} = y_{I \cup J}.
\end{align*}
We can now define the $t$-th SoS lifted polytope.
\begin{definition}
For any $1 \leq t \leq n$, the $t$-th SoS lifted polytope $SoS^t(K)$ is the set of vectors $y \in [0,1]^{\mathcal{P}_{2t}(V)}$ such that $y_{\emptyset} = 1$, $M_{\mathcal{P}_t(V)}(y) \succeq 0$, and $M_{\mathcal{P}_{t-1}(V)}(g_{\ell} * y) \succeq 0$ for all $\ell$.

A point $x \in [0,1]^n$ belongs to the $t$-th SoS polytope $SoS^t(K)$ if there exists $y \in SoS^t(K)$ such that $y_{\{i\}} = x_i$ for all $i \in V$.
\end{definition}

We use a reduction from Grigoriev's SoS lower bound for knapsack \cite{DBLP:journals/cc/Grigoriev01} to show that the following instance has a fractional solution with small radius that is valid for a linear number of rounds of SoS.
\begin{theorem}[Grigoriev]
At least $\min \{ 2 \lfloor \min \{k/2, n-k/2 \} \rfloor + 3, n \}$ rounds of SoS are required to recognize that the following polytope contains no integral solution for $k \in \mathbb{Z}$ odd.
\begin{align*}
    \sum_{i=1}^n 2w_i &= k \\
     w_i &\in [0,1] \,\,\,\, \forall i.
\end{align*}
\label{grigoriev}
\end{theorem}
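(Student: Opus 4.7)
The plan is to invoke SoS duality and construct a feasible point $y \in SoS^t(K)$ for $t$ just below the claimed bound. Equivalently, it suffices to exhibit a symmetric pseudo-expectation $\tilde{\mathbb{E}}$ on polynomials of degree up to $2t$ consistent with all the constraints, and then read off $y_S = \tilde{\mathbb{E}}\bigl[\prod_{i\in S} w_i\bigr]$.

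First I would exploit the full $S_n$-symmetry of the instance: both the Boolean relations $w_i^2=w_i$ and the linear equation $\sum_i 2w_i=k$ are permutation invariant, so I may restrict to symmetric $\tilde{\mathbb{E}}$ of the form $\tilde{\mathbb{E}}\bigl[\prod_{i\in S}w_i\bigr]=f(|S|)$ for a single sequence $f(0),f(1),\dots,f(2t)$. The Boolean relations are then automatic. Requiring $\tilde{\mathbb{E}}\bigl[(2\sum_i w_i - k)\prod_{j\in S}w_j\bigr]=0$ for every $S$ of size $s\leq 2t-1$, and splitting the inner sum according to whether $i\in S$ or not (using $w_i^2=w_i$ in the first case), yields the one-term recurrence
\[
    f(s+1)=\frac{(k-2s)\,f(s)}{2(n-s)},\qquad f(0)=1,
\]
which uniquely determines $f$ and hence $y$.

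The bulk of the proof is then to verify the two positive-semidefiniteness conditions $M_{\mathcal{P}_t(V)}(y)\succeq 0$ and $M_{\mathcal{P}_{t-1}(V)}(g_\ell * y)\succeq 0$ for the box constraints $g_\ell\in\{w_i,\,1-w_i\}$. Using harmonic analysis on the Boolean slice, the $S_n$-action block-diagonalises each such matrix into a direct sum indexed by the irreducible $S_n$-representations (equivalently, the isotypic components of the Johnson scheme); each block can be rewritten as a Gram matrix for the formal hypergeometric ``distribution'' whose moments are the $f(s)$, so its PSDness reduces to nonnegativity of the $f$-values it involves. The recurrence preserves positivity as long as $s<k/2$, and the reflection $w_i\mapsto 1-w_i$ yields the dual bound $s<n-k/2$, giving PSDness up to precisely the degree asserted in the statement.

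The main obstacle is this last PSDness verification. While symmetrisation reduces the algebraic portion to a single scalar recurrence, actually checking that each Johnson-scheme block is semidefinite requires, after a change of basis into discrete orthogonal (Hahn or Krawtchouk) polynomials, reducing to a small matrix whose principal minors can be written as nonnegative combinations of the $f(s)$ — this is the technical heart of Grigoriev's original argument. It is also where the threshold $\min\{k/2,n-k/2\}$ becomes sharp: one level higher the recurrence forces $f$ to change sign, some Johnson block loses PSDness, and SoS refutation becomes possible, matching the stated bound.
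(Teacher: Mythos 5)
First, a point of reference: the paper does not prove this statement. It is quoted verbatim as Grigoriev's theorem and cited from his paper; the authors only \emph{use} it (via the reduction in Lemma 3.1 and the subsequent moment-matrix bookkeeping) to transfer the knapsack lower bound to LP1. So there is no in-paper argument to compare against, and your proposal must stand on its own as a proof of Grigoriev's result.

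Your outline follows the standard (and correct) strategy: restrict by $S_n$-symmetry to a pseudo-expectation of the form $\tilde{\mathbb{E}}[\prod_{i\in S}w_i]=f(|S|)$, derive the recurrence $f(s+1)=\frac{(k-2s)f(s)}{2(n-s)}$ from the linear constraint together with multilinearity, and then block-diagonalize the moment matrices over the Johnson scheme. This is essentially Grigoriev's construction (the $f(s)$ are the moments of a formal hypergeometric distribution with non-integral parameter $k/2$), and the reflection $w_i\mapsto 1-w_i$ is indeed what produces the second threshold $n-k/2$. However, there is a genuine gap at exactly the point you flag as ``the technical heart'': you assert that each isotypic block ``can be rewritten as a Gram matrix\dots so its PSDness reduces to nonnegativity of the $f$-values it involves.'' That reduction is not correct as stated and is not a routine step. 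The diagonal entries of $M_{\mathcal{P}_t(V)}(y)$ being nonnegative (i.e., $f(s)\ge 0$ for $s\le t$) is necessary but far from sufficient; after the Hahn/Krawtchouk change of basis the eigenvalues of the blocks are signed combinations of the $f(s)$ that collapse to products of factors of the form $(k/2-j)$ and $(n-k/2-j)$, and verifying that every such product is nonnegative precisely up to the claimed degree --- and that one of them goes negative one level higher, so the bound is tight --- is the entire content of the theorem. Your proposal names this computation but does not carry it out, nor does it handle the matrices $M_{\mathcal{P}_{t-1}(V)}(g_\ell * y)$ for the box constraints beyond invoking the same unproved reduction. As written, the proposal is a faithful roadmap of the known proof rather than a proof.

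For the purposes of this paper none of this machinery is needed: the theorem is imported as a black box, and the paper's contribution is only the mapping $\pi$ from the knapsack variables to the LP1 variables and the verification (Lemma 3.1 and the shift-operator computations) that feasibility is preserved.
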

Consider an instance of colorful $k$-center with two colors, $8n$ points, $k = n$, and $r = b = 2n$ where $n$ is odd. 
Points $\{4i-3,4i-2, 4i-1,4i \} \forall i\in [2n]$ belong to cluster $C_i$ of radius one.
For odd $i$, $C_i$ has three red points and one blue point and for even $i$, $C_i$ has one red point and three blue points.
A picture is shown in Figure \ref{lasserre-gap}.
In an optimal integer solution, one center needs to cover at least 2 of these clusters while a fractional solution satisfying LP1 can open a center of $1/2$ around each cluster of radius 1.
Hence, LP1 has an unbounded integrality gap since the clusters can be arbitrarily far apart.
This instance takes an odd number of copies of the integrality gap example given in \cite{DBLP:conf/esa/Bandyapadhyay0P19}.

\begin{figure}
\begin{center}
\begin{tikzpicture}[scale=0.5, rotate=90]


\draw (0,0) circle [radius=1];
\draw (3,0) circle [radius=1];
\draw [fill=black] (1.5,1.5) circle [radius=0.025];
\draw [fill=black] (1.5,1.8) circle [radius=0.025];
\draw [fill=black] (1.5,2.1) circle [radius=0.025];
\draw (0,3.5) circle [radius=1];
\draw (3,3.5) circle [radius=1];
\draw (0,6) circle [radius=1];
\draw (3,6) circle [radius=1];
\draw [fill=blue, blue] (-0.4, 0.4) circle [radius=0.1];
\draw [fill=blue, blue] (-0.4, -0.4) circle [radius=0.1];
\draw [fill=blue, blue] (0.4, -0.4) circle [radius=0.1];
\draw [fill=red, red] (0.3, 0.3) rectangle (0.5, 0.5);
\draw [fill=blue, blue] (-0.4, 3.9) circle [radius=0.1];
\draw [fill=blue, blue] (-0.4, 3.1) circle [radius=0.1];
\draw [fill=blue, blue] (0.4, 3.1) circle [radius=0.1];
\draw [fill=red, red] (0.3, 3.8) rectangle (0.5, 4);
\draw [fill=blue, blue] (-0.4, 6.4) circle [radius=0.1];
\draw [fill=blue, blue] (-0.4, 5.6) circle [radius=0.1];
\draw [fill=blue, blue] (0.4, 5.6) circle [radius=0.1];
\draw [fill=red, red] (0.3, 6.3) rectangle (0.5, 6.5);

\draw [fill=red, red] (2.5, 0.3) rectangle (2.7, 0.5);
\draw [fill=red, red] (2.5, -0.5) rectangle (2.7, -0.3);
\draw [fill=red, red] (3.3, -0.5) rectangle (3.5, -0.3);
\draw [fill=blue, blue] (3.4, 0.4) circle [radius=0.1];
\draw [fill=red, red] (2.5, 3.8) rectangle (2.7, 4);
\draw [fill=red, red] (2.5, 3) rectangle (2.7, 3.2);
\draw [fill=red, red] (3.3, 3) rectangle (3.5, 3.2);
\draw [fill=blue, blue] (3.4, 3.9) circle [radius=0.1];
\draw [fill=red, red] (2.5, 6.3) rectangle (2.7, 6.5);
\draw [fill=red, red] (2.5, 5.5) rectangle (2.7, 5.7);
\draw [fill=red, red] (3.3, 5.5) rectangle (3.5, 5.7);
\draw [fill=blue, blue] (3.4, 6.4) circle [radius=0.1];

\draw [dashed] (-1.25, -1.25) rectangle (4.25, 1.25);
\draw [dashed] (-1.25, 2.25) rectangle (4.25, 4.75);
\draw [dashed] (-1.25, 4.75) -- (-1.25, 7.25);
\draw [dashed] (-1.25, 7.25) -- (4.25, 7.25);
\draw [dashed] (4.25, 4.75) -- (4.25, 7.25);

\draw [decorate, decoration={brace, amplitude=15pt, mirror}, xshift=0pt, yshift=0pt] (5, -1.25) -- (5,7.25 );

\node at (6.5, 3) {$n$};

\end{tikzpicture}
\end{center}
\caption{Integrality gap example for linear rounds of SoS}\label{lasserre-gap}
\vspace{0.5cm}
\end{figure}
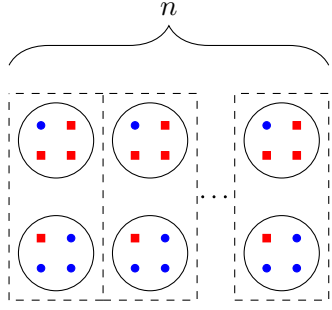

We can do a simple mapping from a feasible solution for the $t$th round of SoS on the system of equations in Theorem \ref{grigoriev} to our variables in the $t$th round of SoS on LP1 for this instance to demonstrate that the infeasibility of balls of radius one is not recognized.
More precisely, we assign a variable $w_i$ to each pair of clusters of radius one as shown in Figure \ref{lasserre-gap}, corresponding to opening each cluster in the pair by $w_i$ amount.
Then a fractional opening of balls of radius one can be mapped to variables that satisfy the polytope in Theorem \ref{grigoriev}. The remainder of this subsection is dedicated to formally describing the reduction from Theorem \ref{grigoriev}.\\
Let $W$ denote the set of variables used in the polytope defined in Theorem \ref{grigoriev}. 
Let $w$ be in the $t$-th round of SoS applied to the system in Theorem \ref{grigoriev} so that $w$ is indexed by subsets of $W$ of size at most $t$. 
Let $V = V_x\cup V_z$, where $V_x = \{x_1, \dots, x_{8n}\}$ and $V_z =\{ z_1, \dots, z_{8n}\}$, be the set of variables used in LP1 for the instance shown in Figure \ref{lasserre-gap}.
We define vector $y$ with entries indexed by subsets of $V$, and show that $y$ is in the $t$-th SoS lifting of LP1. 
In each ball we pick a representative $x_i$, $i \equiv 1 \mod 4$, to indicate how much the ball is opened, so we set $y_I = 0$ if $x_j \in I$, $j \not \equiv 1 \mod 4$.
Otherwise, we set $y_I = w_{\pi(I)}$ where
\begin{align*}
\pi(I) &= \{w_i : x_{8i-3} \mbox{ or } x_{8i-7} \mbox{ or } z_{8i-j} \in I, \mbox{ for some } i \in [n], j \in [7] \}.
\end{align*}
 
We have $M_{\mathcal{P}_t(W)}(w) \succeq 0$, and for $g_1 = -n + \sum_{i=1}^n 2x_i$ and $g_2 = n - \sum_{i=1}^n 2x_i$, $M_{\mathcal{P}_{t-1}(W)}(g_{\ell} * w) \succeq 0$ for $\ell = 1, 2$ since $w$ satisfies the $t$-th round of SoS.
This implies that $M_{\mathcal{P}_{t-1}(W)}(g_{\ell} * w)$ is the zero matrix.

To show that $M_{\mathcal{P}_t(V)}(y)\succeq 0$, we start with $M_{\mathcal{P}_t(W)}(w)$ and construct a sequence of matrices such that the semidefiniteness of one implies the semidefiniteness of the next, until we arrive at a matrix that is $M_{\mathcal{P}_t(V)}(y)$ with rows and columns permuted, i.e. $M_{\mathcal{P}_t(V)}(y)$ multiplied on the left and right by a permutation matrix and its transpose.
Since the eigenvalues of a matrix are invariant under this operation, $M_{\mathcal{P}_t(W)}(w) \succeq 0$ implies that $M_{\mathcal{P}_t(V)}(y)\succeq 0$.

\begin{lemma}
There exists a sequence of square matrices $M_{\mathcal{P}_t(W)}(w) := M_0$, $M_1$, $M_2$, $\dots$, $M_p$, such that the rank of $M_{i}$ is the same as the rank of $M_{i+1}$, $M_i$ is the leading principal submatrix of $M_{i+1}$ of dimension one less, and $M_p$ is $M_{\mathcal{P}_t(V)}(y)$ with rows and columns permuted.
\label{sos-reduction}
\end{lemma}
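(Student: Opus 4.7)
The plan is to build the sequence by growing the matrix one row and column at a time, using an identification of each subset $I_W \in \mathcal{P}_t(W)$ with the ``representative'' subset $\hat{I}_W := \{x_{8i-7} : w_i \in I_W\} \in \mathcal{P}_t(V)$. Since $8i-7 \equiv 1 \pmod 4$, the set $\hat{I}_W$ avoids the zeroing case in the definition of $y$ and satisfies $\pi(\hat{I}_W) = I_W$. Consequently $w_{I_W \cup J_W} = y_{\hat{I}_W \cup \hat{J}_W}$ for all $I_W, J_W \in \mathcal{P}_t(W)$, so after this relabeling $M_0$ is exactly the principal submatrix of $M_{\mathcal{P}_t(V)}(y)$ induced by the representative subsets.

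Next, I would enumerate the remaining (non-representative) subsets $A_1, A_2, \ldots$ of $\mathcal{P}_t(V)$ in an arbitrary order, and form $M_{k+1}$ from $M_k$ by appending one row and one column indexed by $A_{k+1}$, with entries $y_{A_{k+1} \cup B}$ for every current index $B$. I would split into two cases to show that the new row/column either vanishes or duplicates an existing one. If $A_{k+1}$ contains some $x_j$ with $j \not\equiv 1 \pmod 4$, then every $A_{k+1} \cup B$ inherits $x_j$, so the whole new row and column vanish by the definition of $y$, and both rank and semidefiniteness are preserved. Otherwise, $\widehat{\pi(A_{k+1})}$ is a representative subset of size at most $|A_{k+1}| \leq t$ and is therefore already indexed in $M_k$, and I claim the new row/column duplicates that of $\widehat{\pi(A_{k+1})}$: for an index $B$ carrying a zeroing variable both $y_{A_{k+1} \cup B}$ and $y_{\widehat{\pi(A_{k+1})} \cup B}$ are zero, while for any other $B$ both equal $w_{\pi(A_{k+1}) \cup \pi(B)}$ by the identity $\pi(A \cup B) = \pi(A) \cup \pi(B)$ together with $\pi(\widehat{\pi(A_{k+1})}) = \pi(A_{k+1})$. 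The diagonal entry matches by the same calculation applied with $B = A_{k+1}$. Duplicate rows and columns leave the rank unchanged.

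After $p = |\mathcal{P}_t(V)| - |\mathcal{P}_t(W)|$ such extensions, every element of $\mathcal{P}_t(V)$ indexes some row/column and each entry equals $y$ on the union of its row and column indices, so $M_p$ equals $M_{\mathcal{P}_t(V)}(y)$ with rows and columns in the permuted order induced by our enumeration. The main obstacle I foresee is not the linear-algebraic bookkeeping but the clean handling of indices $B$ that themselves contain a zeroing variable: the tempting one-line identity $y_{A \cup B} = w_{\pi(A) \cup \pi(B)}$ is valid only when neither $A$ nor $B$ is zeroed, and the zeroed subcase must be treated separately by observing that both entries vanish for the same syntactic reason. Once this split is made explicit, the rank-preserving invariant propagates through every step.
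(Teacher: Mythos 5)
Your proof is correct and follows essentially the same route as the paper's: seed the sequence with the principal submatrix indexed by the representative subsets built from $\{x_{8i-7} : i\in[n]\}$, which reproduces $M_{\mathcal{P}_t(W)}(w)$, then append the remaining indices one at a time, showing each new row and column is either identically zero (when the new index contains a zeroing variable) or an exact duplicate of an earlier row and column, so that rank is preserved at every step. If anything, your explicit identification of the duplicated row as the one indexed by the representative of $\pi(A_{k+1})$, together with the observation that no special ordering of the remaining subsets is required, is a slightly more careful rendering of the paper's corresponding case analysis.
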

\begin{proof}
We claim that this sequence of matrices exists with the following description.
Firstly, the matrix $M_{i+1}$ has one extra row and column than $M_i$, and is the same on the leading principal submatrix of size $M_i$.
Then there are two possibilities:
\begin{enumerate}[label=(\alph*)]
    \item The last row and column of $M_{i+1}$ are all zeroes, or
    \item for some $j$, the last row of $M_{i+1}$ is a copy of the $j$th row of $M_i$, the last column is a copy of the $j$th column of $M_i$, and the last entry is $(M_i)_{j,j}$.
\end{enumerate}
Either way, the rank of $M_{i+1}$ would be the same as the rank of $M_i$.

To prove this claim, it suffices to consider a sequence of indices of the matrix $M_{\mathcal{P}_t(V)}(y)$.
The matrix $M_0$ in our sequence will be the submatrix of $M_{\mathcal{P}_t(V)}(y)$ indexed by the first $k$ indices, where $k$ is the dimension of $M_{\mathcal{P}_t(W)}(w)$, i.e. the number of subsets of $W$ of size at most $t$. 
Each subsequent matrix $M_i$ will be the submatrix of $M_{\mathcal{P}_t(V)}(y)$ indexed by the first $k+i$ indices.
Note that the rows/columns of $M_{\mathcal{P}_t(V)}(y)$ can be considered to be indexed by all the subsets of $V$ of size at most $t$.
With this in mind, consider a sequence of subsets of $V$ of size at most $t$ with the following properties:
\begin{enumerate}
    \item All subsets of $\{x_{8i-7}: i \in [n]\}$ of size at most $t$ form a prefix of our sequence.
    \item Each set index after the first has exactly one more element than some set index that came earlier in the sequence.
\end{enumerate}
It is clear that it is possible to arrange all the subsets of $V$ of size at most $t$ in a sequence to satisfy these properties.
It only remains to show that this sequence produces the desired construction for $M_0, M_1, \dots, M_p$.

We have
\begin{align*}
    \left( M_{\mathcal{P}_t(y)} \right)_{I,J} = y_{I \cup J} = w_{\pi(I \cup J)} = w_{\pi(I), \pi(J)}
\end{align*}
so property (1) guarantees that we begin with $M_0$ being $M_{\mathcal{P}_t(W)}(w)$, up to the correct permutation of subsets of $\{x_{8i-7}: i \in [n]\}$.
Now consider some $k'$th index in the sequence, $k' > k$ where $k$ is the dimension of $M_{\mathcal{P}_t(W)}(w)$.
By property (2), it is of the form $J \cup \{x\}$, where $J$ is one of the first $k' - 1$ indices, and $x \in V$.
There are two cases:
\begin{itemize}
    \item If $x$ is some $x_i$ with $i \not\equiv 1 \mod 4$, then $y_{I_{\ell} \cup J} = 0$ for all $\ell \leq k'$.
    \item Otherwise, $\pi(J \cup \{x\}) = \pi(J)$.
\end{itemize}
In the first case, the matrix constructed from the first $k'$ indices will have property (a), and in the second, property (b).
Finally, it is clear that at each step the dimension of the matrices increases by one, and that it is the leading principal submatrix of the following matrix in the sequence, until we end up with $M_{\mathcal{P}_t(V)}(y)$ (up to some permutation of its rows and columns).
\end{proof}

By the rank-nullity theorem, $M_{i+1}$ has one more 0 eigenvalue than $M_i$, so we can apply the following theorem.
\begin{theorem}[Cauchy's Interlace Theorem]
Let $A$ be a symmetric $n \times n$ matrix and $B$ be a principal submatrix of $A$ of dimension $(n-1) \times (n-1)$.
If the eigenvalues of $A$ are $\alpha_1 \geq \cdots \geq \alpha_n$ and the eigenvalues of $B$ are $\beta_1 \geq \cdots \geq \beta_{n-1}$ then $\alpha_1 \geq \beta_1 \geq \alpha_2 \geq \beta_2 \geq \cdots \geq \alpha_{n-1} \geq \beta_{n-1} \geq \alpha_n$.
\label{cauchy}
\end{theorem}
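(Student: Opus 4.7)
The plan is to prove Cauchy's Interlace Theorem via the Courant--Fischer min--max characterization of eigenvalues of a real symmetric matrix, which states that for a symmetric $n \times n$ matrix $A$ with eigenvalues $\alpha_1 \geq \cdots \geq \alpha_n$,
\begin{align*}
    \alpha_k \;=\; \max_{\substack{S \subseteq \mathbb{R}^n \\ \dim S = k}} \;\min_{\substack{x \in S \\ \|x\|=1}} x^{\top} A x \;=\; \min_{\substack{S \subseteq \mathbb{R}^n \\ \dim S = n-k+1}} \;\max_{\substack{x \in S \\ \|x\|=1}} x^{\top} A x,
\end{align*}
and analogously for $B$ with its $n-1$ eigenvalues $\beta_1 \geq \cdots \geq \beta_{n-1}$. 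Without loss of generality (by conjugating $A$ with a permutation matrix, which does not change eigenvalues), I assume $B$ is the leading $(n-1)\times(n-1)$ principal submatrix of $A$, so that any vector $x' \in \mathbb{R}^{n-1}$ can be embedded into $\mathbb{R}^n$ as $(x',0)$ with the property $x'^{\top} B x' = (x',0)^{\top} A (x',0)$ and $\|(x',0)\| = \|x'\|$.

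For the inequality $\alpha_k \geq \beta_k$, I would take a $k$-dimensional subspace $S^* \subseteq \mathbb{R}^{n-1}$ achieving the maximum in the max--min characterization of $\beta_k$, and consider its embedding $\widetilde{S} \subseteq \mathbb{R}^n$, which still has dimension $k$. By the quadratic-form identity above, $\min_{x \in \widetilde{S}, \|x\|=1} x^{\top} A x = \min_{x' \in S^*, \|x'\|=1} x'^{\top} B x' = \beta_k$, and hence the max over all $k$-dimensional subspaces of $\mathbb{R}^n$ is at least $\beta_k$, proving $\alpha_k \geq \beta_k$. For the interlacing inequality $\beta_k \geq \alpha_{k+1}$, I would instead pick an $(n-k)$-dimensional subspace $T^* \subseteq \mathbb{R}^{n-1}$ attaining the minimum in the min--max characterization of $\beta_k$, and embed it into $\mathbb{R}^n$, yielding a subspace of dimension $n - k = n - (k+1) + 1$ whose maximum Rayleigh quotient with respect to $A$ equals $\beta_k$ by the same embedding identity. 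Applying the min--max formula for $\alpha_{k+1}$ then gives $\alpha_{k+1} \leq \beta_k$.

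I expect the proof itself to be entirely routine once the two characterizations are in hand; the only thing to be careful about is verifying that embedding preserves both norm and Rayleigh quotient (which relies crucially on $B$ being a \emph{principal} submatrix, not an arbitrary one), and tracking the dimension bookkeeping so that the correct index $k+1$ rather than $k$ appears on the $\alpha$-side. If one wanted to avoid invoking Courant--Fischer as a black box, the alternative would be to apply the intermediate value theorem to the characteristic polynomials of $A$ and $B$ using the fact that $\det(\lambda I - A)$ and $\det(\lambda I - B)$ satisfy a determinantal expansion relating their sign changes, but the min--max route is substantially cleaner and is what I would pursue.
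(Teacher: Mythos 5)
Your proposal is correct: the Courant--Fischer argument is the standard proof of Cauchy interlacing, and your dimension bookkeeping is right --- the $k$-dimensional embedded subspace gives $\alpha_k \geq \beta_k$ via the max--min form, and the $(n-k)$-dimensional embedded subspace gives $\beta_k \geq \alpha_{k+1}$ via the min--max form, which together yield the full interlacing chain. Note, however, that the paper does not prove this statement at all; it is invoked as a classical black-box result (used only to conclude that appending a zero eigenvalue to a positive semidefinite principal submatrix keeps the larger matrix positive semidefinite), so there is no in-paper argument to compare against. Your proof would serve as a complete and correct supply of the missing details.
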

With $M_{i+1} = A$ and $M_i = B$ as in Theorem \ref{cauchy} we have that $\alpha_n = 0$ (since $M_{i+1}$ and $M_i$ have the same eigenvalues but the dimension of the zero eigenspace of $M_{i+1}$ is one greater than that of $M_i$).
Hence, $M_{i+1}$ has no negative eigenvalues if $M_i$ has no negative eigenvalues.
This is sufficient to show that each matrix in the sequence constructed is positive semidefinite, and concludes the proof that $M_{\mathcal{P}_t(V)}(y)\succeq 0$. 

It remains to show that the matrices arising from the shift operator between $y$ and the linear constraints of our polytope are positive semidefinite.
Let $h_i$ denote the linear inequalities in LP1.
In essence, the corresponding moment matrices $M_{\mathcal{P}_{t-1}(V)}(h_i * y)$ are zero matrices since all $h_i$ are tight for the example in Figure \ref{lasserre-gap}.
Formally, we have
\begin{lemma}
Matrices $M_{\mathcal{P}_{t-1}(V)}(h_{\ell} * y)$ are the zero matrix, for each $h_{\ell}$ a linear constraint from LP1.
\end{lemma}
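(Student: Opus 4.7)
The plan is to verify, for each linear constraint $h_\ell$ of LP1, that $(h_\ell * y)_I = 0$ for every $I \subseteq V$ with $|I| \leq 2t-2$; since $(M_{\mathcal{P}_{t-1}(V)}(h_\ell * y))_{I,J} = (h_\ell * y)_{I \cup J}$, this immediately yields the zero matrix. The high-level idea is that each such evaluation reduces either to a trivial $0$ (when the non-representative $x$-variable rule forces all relevant entries of $y$ to vanish) or to an instance of $(g_2 * w)_{\pi(I)}$, which is zero by the hypothesis that $M_{\mathcal{P}_{t-1}(W)}(g_\ell * w) = 0$.

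Fix $I \subseteq V$. If $I$ contains some $x_i$ with $i \not\equiv 1 \pmod{4}$, then by the definition of $y$ every term $y_{I \cup \{v\}}$ appearing in $(h_\ell * y)_I$ still contains that non-representative index, so each such term vanishes and the conclusion is immediate. Otherwise I will use the key $\pi$-identity that for every $v$ which is either a representative $x$-variable or a $z$-variable of pair $i$, one has $\pi(I \cup \{v\}) = \pi(I) \cup \{w_i\}$, and hence $y_{I \cup \{v\}} = w_{\pi(I) \cup \{w_i\}}$.

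The four constraint families are then handled as follows. For the covering inequality $h_{1,j} = \sum_{i \in \mathcal{B}(j)} x_i - z_j$, the clusters in Figure~\ref{lasserre-gap} are pairwise far apart, so $\mathcal{B}(j)$ equals the cluster containing $j$; only its unique representative contributes a non-zero term, and both $y_{I \cup \{x_{\text{rep}}\}}$ and $y_{I \cup \{z_j\}}$ equal $w_{\pi(I) \cup \{w_i\}}$ where $i$ is the pair of $j$, so they cancel. For the budget constraint $h_2 = n - \sum_{x \in V_x} x$, only the $2n$ representative $x$-variables contribute, two per pair, giving $(h_2 * y)_I = n \, w_{\pi(I)} - 2 \sum_{i=1}^n w_{\pi(I) \cup \{w_i\}} = (g_2 * w)_{\pi(I)} = 0$. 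For the coverage constraints $h_3 = \sum_{j \in R} z_j - 2n$ and $h_4 = \sum_{j \in B} z_j - 2n$, the crucial counting fact is that each of the $n$ pairs contains exactly four red and four blue points (three red plus one red from the odd and even clusters respectively, and symmetrically for blue), so $(h_3 * y)_I = 4 \sum_i w_{\pi(I) \cup \{w_i\}} - 2n\, w_{\pi(I)} = -2(g_2 * w)_{\pi(I)} = 0$, and the analogous computation settles $h_4$.

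The only part requiring care, and thus the main bookkeeping obstacle, is to establish the $\pi$-identity uniformly in $I$, but this is immediate from the definition of $\pi$ via membership checks on the ambient variables and the fact that $\pi$ outputs a set. Once that is isolated, the lemma is a mechanical reduction: constraints (2), (3), and (4) each collapse, by design, to a scalar multiple of $(g_2 * w)_{\pi(I)}$, and constraint (1) collapses to $0$ because the single representative in each ball is matched by the corresponding $z$-variable. This mirrors exactly the intent of the construction in Figure~\ref{lasserre-gap}: the clustering structure is arranged so that the slack polynomials of LP1 applied to $y$ project onto the tight polynomials $g_1, g_2$ of the Grigoriev polytope, which is why the infeasibility is not detected at round $t$.
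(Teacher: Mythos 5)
Your proposal is correct and follows essentially the same route as the paper: entrywise evaluation of each moment matrix, the observation that non-representative $x$-variables force entries of $y$ to vanish, and the reduction of $h_{1,j}$ to a cancellation via $\pi(\{x_i\})=\pi(\{z_j\})$ and of $h_2,h_3,h_4$ to scalar multiples of $(g_\ell * w)$, which vanish because $g_1=-g_2$ forces both shifted moment matrices to be zero. Your extra bookkeeping (the explicit $\pi$-identity and the count of four red and four blue points per pair) only makes explicit what the paper's computation uses implicitly.
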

\begin{proof}
Let $h_{1,j}$ be the linear polynomial that corresponds to the first inequality of LP1 for $j \in P$.
First, if $i \not\equiv 1 \mod 4$, then $y_{I \cup \{x_i \}} = 0$ for any $I \subseteq V$.
Otherwise, we have
\begin{align*}
    (M_{\mathcal{P}_{t-1}}(h_{1j} * y))_{I,J} &= \left(\sum_{i \in \mathcal{B}(j, 1)} y_{I\cup J\cup \{x_{i}\}} \right) - y_{I\cup J \cup \{z_j\}} \\ &= w_{\pi(I\cup J)\cup \pi(x_{i})} - w_{\pi(I\cup J) \cup \pi(z_j)} = 0
\end{align*}
since $\pi(\{x_i\}) = \pi(z_j)$ for $i \in \mathcal{B}(j,1)$, $i \equiv 1 \mod 4$.
For the remaining inequalities of LP1: $h_2$, $h_3$, and $h_4$, we have that $M_{\mathcal{P}_{t-1}(V)}(h_{\ell} * y)$ is the zero matrix because of how we defined the projection onto $w$:
\begin{align*}
    (M_{\mathcal{P}_{t-1}}(h_2 * y))_{I,J} &= ny_{I\cup J} - \sum_{x_j\in V_x} y_{I\cup J \cup \{x_j\}}\\
        & = nw_{\pi(I\cup J)} - \sum_{j=1}^n 2w_{\pi(I\cup J \cup \{w_j\})} \\
        & = (M_{\mathcal{P}_{t-1}}(g_2 * w))_{\pi(I),\pi(J)} = 0 \\
    M_{\mathcal{P}_{t-1}}(h_3 * y))_{I,J} &= M_{\mathcal{P}_{t-1}}(h_4 * y))_{I,J} \\
        & = \left( \sum_{j\in R} y_{I\cup J \cup \{z_j\}} \right) - 2ny_{I\cup J}\\
        &= \left( \sum_{i=1}^{n} 4w_{\pi(I\cup J) \cup \{w_i\}} \right) - 2nw_{\pi(I\cup J)} \\
        &= 2(M_{\mathcal{P}_{t-1}}(g_1 * w))_{\pi(I),\pi(J)} = 0. 
\end{align*}
\end{proof}
This concludes the formal proof of the following theorem.
\begin{theorem}
The integrality gap of LP1 with $8n$ points persists up to $\Omega(n)$ rounds of Sum-of-Squares. \hfill $\square$
\end{theorem}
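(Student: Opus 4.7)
The plan is to reduce from Grigoriev's knapsack lower bound (Theorem~\ref{grigoriev}) to the instance depicted in Figure~\ref{lasserre-gap}. Since it has already been argued that the integral optimum of this instance has unbounded radius while fractionally opening every ball by $1/2$ certifies radius one, it suffices to lift the $1/2$-fractional solution into the $t$-th SoS round for $t = \Omega(n)$. My plan is to obtain such a lift by pulling back a Grigoriev witness $w$ on $n$ variables with target $k = n$ odd, which Theorem~\ref{grigoriev} provides for $t = \Omega(n)$, through the collapse map $\pi$ described in the excerpt.

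Concretely, for each radius-one ball I single out a representative variable $x_i$ with $i \equiv 1 \pmod 4$, and I define $y_I = 0$ whenever $I$ contains any non-representative $x_j$, and $y_I = w_{\pi(I)}$ otherwise, where $\pi$ sends the representatives and the $z$-variables of the $i$-th cluster-pair to the Grigoriev variable $w_i$. Then $y_\emptyset = w_\emptyset = 1$, the marginals $y_{\{x_i\}}$ recover $1/2$ on each representative, and membership in $[0,1]^{\mathcal{P}_{2t}(V)}$ is inherited from $w$. To conclude $y \in SoS^t(\text{LP1})$, I must check two conditions: $M_{\mathcal{P}_t(V)}(y) \succeq 0$ and $M_{\mathcal{P}_{t-1}(V)}(h_\ell * y) \succeq 0$ for every linear constraint polynomial $h_\ell$ of LP1.

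The main obstacle is the first condition, and I would attack it exactly as in Lemma~\ref{sos-reduction}: order the subsets of $V$ of size at most $t$ so that the subsets of representatives come first in Grigoriev's order, and every subsequent set differs from an earlier one by a single added element. Each enlargement then produces either an all-zero row/column (when a non-representative $x_j$ is appended, forcing $y_I = 0$) or a row/column duplicating an earlier one (when $\pi$ is unchanged, because $y$ only depends on $\pi(I)$), so the rank is preserved, exactly one new eigenvalue is zero, and Cauchy's interlace theorem (Theorem~\ref{cauchy}) forces positive semidefiniteness to propagate. Iterating from $M_{\mathcal{P}_t(W)}(w) \succeq 0$ delivers $M_{\mathcal{P}_t(V)}(y) \succeq 0$.

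The shifted matrices are easier: every LP1 constraint is tight on the $1/2$-fractional solution (each cluster has four points split appropriately, $\sum_i x_i = n = k$, and $\sum z$ equals the coverage requirement on each color), and tightness transports through $\pi$ so that each $M_{\mathcal{P}_{t-1}(V)}(h_\ell * y)$ pulls back to either the zero matrix directly or to $g_1 * w$ or $g_2 * w$ up to a scalar, both of which vanish since $w$ satisfies Grigoriev's equality. All shifted matrices are therefore the zero matrix and trivially positive semidefinite. Combining these ingredients yields $y \in SoS^t(\text{LP1})$ for $t = \Omega(n)$, and since the integral radius is unbounded, this gives the desired integrality gap.
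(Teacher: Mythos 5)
Your proposal is correct and follows essentially the same route as the paper: the same $8n$-point instance, the same collapse map $\pi$ onto Grigoriev's knapsack variables with $y_I=0$ on non-representative $x_j$'s, the same rank-preserving matrix sequence plus Cauchy interlacing for $M_{\mathcal{P}_t(V)}(y)\succeq 0$, and the same tightness argument showing all shifted constraint matrices vanish. No gaps to report.
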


\subsection{Flow Constraints} \label{flow-section}

In this section we add additional constraints based on standard techniques to LP1. These incorporate knapsack constraints for the fractional centers produced in the hope of obtaining a better clustering and show that this fails to reduce the integrality gap.

We define an instance of a knapsack problem with multiple objectives.
Each point $p \in P$ corresponds to an item with three dimensions: a dimension of size one to restrict the number of centers, $|B \cap \mathcal{B}(p)|$, and $|R \cap \mathcal{B}(p)|$.
We set up a flow network with an $(n+1) \times n \times n \times k$ grid of nodes and we name the nodes with the coordinate $(w,x,y,z)$ of its position.
The source $s$ is located at $(0,0,0,0)$ and we add an extra node $t$ for the sink.
Assign an arbitrary order to the points in $P$.
For the item corresponding to $i \in P$, for each $x \in [n]$, $y \in [n]$, $z \in [k]$:
\begin{enumerate}
    \item Add an edge from $(i, x, y, z )$ to $(i+1, x, y, z)$ with flow variable $e_{i,x,y,z}$.
    \item With $b_i := |B \cap \mathcal{B}(i)|$ and $r_i := |R \cap \mathcal{B}(i)|$, if $z < k$ add an edge from $(i, x, y, z)$ to $(i+1, \min \{x+b_i, n\}, \min\{y+b_i, n\}, z+1)$ with flow variable $f_{i,x, y,z}$.
\end{enumerate}
For each $x \in [b, n]$, $y \in [r, n]$:
\begin{enumerate}
\setcounter{enumi}{2}
    \item Add an edge from $(n+1, x, y, k)$ to $t$ with flow variable $g_{x, y}$.
\end{enumerate}
Set the capacities of all edges to one.
In addition to the usual flow constraints, add to LP1 the constraints
\begin{align}
    x_i &= \sum_{x, y \in [n], z \in [k]} f_{i,x,y,z} \quad \mbox{for all } i \in P \\
    1 - x_i &= \sum_{x, y \in [n], z \in [k]} e_{i,x,y,z} \quad \mbox{for all } i \in P.
\end{align}
We refer to the resulting linear program as LP3. 
Notice that an integral solution to LP1 defines a path from $s$ to $t$ through which one unit of flow can be sent; hence LP3 is a valid relaxation.
On the other hand, any path $P$ from $s$ to $t$ defines a set $C_P$ of at most $k$ centers by taking those points $c$ for which $f_{c,x,y, z} \in P$ for some $x, y$, and $z$. Moreover, as  $t$ can only be reached from a coordinate with $x\geq b$ and $y\geq r$ we have that $\sum_{c\in C_P} |\mathcal{B}(c) \cap B| \geq b$ and $\sum_{c\in C_P} |\mathcal{B}(c) \cap R| \geq r$. 
It follows that $C_P$ forms a solution to the problem of radius one \emph{if the balls are disjoint.} In particular, our integrality gap instances for the Sum-of-Squares hierarchy do not fool LP3.   

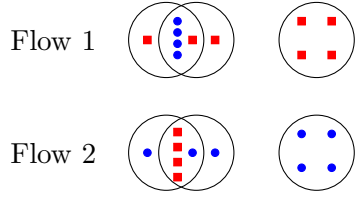
\begin{figure}[t]
\centering
\begin{tikzpicture}[scale=0.5]
\draw (0, 0) circle [radius=1];
\draw (0.8, 0) circle [radius=1];
\draw (4, 0) circle [radius=1];

\draw (0, 3) circle [radius=1];
\draw (0.8, 3) circle [radius=1];
\draw (4, 3) circle [radius=1];

\draw [blue, fill=blue] (-0.5, 0) circle [radius = 0.1];
\draw [blue, fill=blue] (0.7, 0) circle [radius = 0.1];
\draw [blue, fill=blue] (1.3, 0) circle [radius = 0.1];
\draw [red, fill=red] (0.2, 0.45) rectangle (0.4, 0.65);
\draw [red, fill=red] (0.2, 0.05) rectangle (0.4, 0.25);
\draw [red, fill=red] (0.2, -0.35) rectangle (0.4, -0.15);
\draw [red, fill=red] (0.2, -0.75) rectangle (0.4, -0.55);

\draw [red, fill=red] (-0.6, 2.9) rectangle (-0.4, 3.1);
\draw [red, fill=red] (0.6, 2.9) rectangle (0.8, 3.1);
\draw [red, fill=red] (1.2, 2.9) rectangle (1.4, 3.1);
\draw [blue, fill=blue] (0.3, 3.2) circle [radius = 0.1];
\draw [blue, fill=blue] (0.3, 3.5) circle [radius = 0.1];
\draw [blue, fill=blue] (0.3, 2.9) circle [radius = 0.1];
\draw [blue, fill=blue] (0.3, 2.6) circle [radius = 0.1];

\draw [red, fill=red] (3.5, 2.5) rectangle (3.7, 2.7);
\draw [red, fill=red] (3.5, 3.4) rectangle (3.7, 3.6);
\draw [red, fill=red] (4.3, 2.5) rectangle (4.5, 2.7);
\draw [red, fill=red] (4.3, 3.4) rectangle (4.5, 3.6);

\draw [blue, fill=blue] (3.6, -0.4) circle [radius = 0.1];
\draw [blue, fill=blue] (3.6, 0.5) circle [radius = 0.1];
\draw [blue, fill=blue] (4.4, -0.4) circle [radius = 0.1];
\draw [blue, fill=blue] (4.4, 0.5) circle [radius = 0.1];

\node at (-3, 0) {Flow 2};
\node at (-3, 3) {Flow 1};
\end{tikzpicture}
\caption{$k=3$, $r=b=8$}\label{flow-gap}
\end{figure}

The example in Figure \ref{flow-gap} shows that in an instance where balls overlap, the integrality gap remains large.
Here, the fractional assignment of open centers is $1/2$ for each of the six balls and this gives a fractional covering of 8 red and 8 blue points as required.
This assignment also satisfies the flow constraints because the three balls at the top of the diagram define a path disjoint from the three at the bottom.
By double counting the five points in the intersection of two balls we cover 8 red and 8 blue points with each set of three balls.
Hence, we can send flow along each path.
However, this does not give a feasible integral solution with three centers as any set of three clusters does not contain enough points.
In fact, the four clusters can be placed arbitrarily far from each other and in this way we have an unbounded integrality gap since one ball needs to cover two clusters.

%
%
%
\bibliographystyle{splncs04}
\bibliography{refs}

\begin{subappendices}
\renewcommand{\thesection}{\Alph{section}}
\setcounter{section}{0} 
\section{Dynamic Programming for Dense Points}
\label{sec:dynamic_programming}
In this section we describe the dynamic programming algorithm  discussed in Lemma \ref{algorithms}. As stated in the proof of Lemma \ref{algorithms}, given $I = \cup_j I_j$ and correct guesses for $k_d,b_d,r_d$, we need to find $k_d$ balls of radius one centered at points in $I$ covering $b_d$ blue and $r_d$ red points with at most one point from each $I_j\in I$ picked as a center. To do this, we first order the sets in $I$ arbitrarily as $I= \{I_{j_1},\ldots,I_{j_m}\}, m = |I|$. We create a 4-dimensional table $T$ of dimension $(m,b_d,r_d,k_d)$. $T[m',b',r',k']$ stores whether there is a set of $k'$ balls in the first $m'$ sets of $I$ covering $b'$ blue and $r'$ red points. The recurrence relation for T is
\begin{align*}
    T[0,0,0,0] &= \mbox{True} \\
    T[0, b', r', k'] &= \mbox{False,} \quad \mbox{for any } b', r',k' \neq 0 \\
    T[m', b', r', k'] &= \begin{cases}
            \mbox{True} &\mbox{if } T[m'-1, b', r', k'] = \mbox{True} \\
            \mbox{True} &\mbox{if } \exists c \in I_{j_{m'}} \mbox{ s.t. } T[m'-1,b'',r'',k'-1] = \mbox{True, for} \\
                &b'' = b'- |\mathcal{B}(c) \cap B|, r'' = r'
            - |\mathcal{B}(c) \cap R| \\
            \mbox{False } &\mbox{otherwise}
        \end{cases}.
\end{align*}
The table $T$ has size $O((m+1) \cdot (n+1) \cdot (n+1) \cdot(n+1)) = O(n^4)$ since the first parameter has range from $0$ to $m$, and the other parameters can have value $0$ up to at most $n$. Moreover, since $|I_j| \leq n$ for all $I_j \in I$, we can compute the the whole table in time $O(n^5)$ using e.g. the bottom-up approach. We can also remember the choices in a separate table and so we can find a solution in time $O(n^5)$ if it exists.

\section{The Clustering Algorithm}\label{clusteringproof}
In this section we present the clustering algorithm used in \cite{DBLP:conf/esa/Bandyapadhyay0P19} with a simple modification. The algorithm is described in pseudo-code in Algorithm \ref{clusteringalg}. \\

\begin{algorithm}[H]\label{clusteringalg}
\SetAlgoLined
$S \leftarrow \emptyset$, $P' \leftarrow P$ \\
\While{$P' \neq \emptyset$ and $\underset{j\in P^{'}}{\max}$ $z_j >0$}
    {
    $j \in P'$ be a point with maximum $z_j$: let $S \leftarrow S \cup \{j\}$ \\
    $y_j \leftarrow \min \{1, \sum_{i \in \mathcal{B}(j)} x_i  \}$; $\tilde{z}_j \leftarrow y_j$ \\
    $C_j \leftarrow \mathcal{F}(j)  \cap P'$ \\
    For all $j' \neq j \in C_j$, set $y_{j'} \leftarrow 0$, $\tilde{z}_{j'} \leftarrow \tilde{z}_j$ \\
    $P' \leftarrow P' \setminus C_j$
    }
 \caption{Clustering Algorithm}
\end{algorithm}
\vspace{3mm}

Now we state the theorem which states the properties of this clustering algorithm used in Section \ref{pseudo-approx}.
\begin{theorem}
Given a feasible fractional solution $(x,z)$ to LP1, the set of points $S \subseteq P$ and
clusters $C_j \subseteq P$ for every $j\in S$ produced by Algorithm \ref{clusteringalg} satisfy:
\begin{enumerate}
    \item The set $S$ is a subset of  the points $\{j\in P : z_j > 0\}$ with  positive $z$-values.
    \item For each $j\in S$, we have $C_j \subseteq \mathcal{F}(j)$ and the clusters  $\{C_j\}_{j\in S}$ are pairwise disjoint.
\end{enumerate}
Moreover, if we let $R_j = C_j \cap R$ and $B_j = C_j \cap B$ with $r_j = |R_j|$ and $b_j = |B_j|$ for $j\in S$, then $y$ is a feasible solution to LP2 (depicted on the right in Fig.\ref{fig:LPs}) with objective value at least $r$. 
\end{theorem}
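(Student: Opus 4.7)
The proposal is to verify the three conclusions one by one, with most of the work concentrated on the feasibility and the objective value of $y$ for LP2.

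Properties (1) and (2) follow immediately from the algorithm's pseudocode. Property (1) holds because a point $j$ is added to $S$ only when it attains the maximum $z$-value over $P'$ and the while-loop condition enforces $z_j > 0$. Property (2) holds because when $j$ is added to $S$, the algorithm sets $C_j \leftarrow \mathcal{F}(j) \cap P'$, so $C_j \subseteq \mathcal{F}(j)$; disjointness of the $C_j$'s follows because $P'$ is updated to $P' \setminus C_j$, so a point is removed from $P'$ the first time it falls into some cluster.

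The core structural observation is that the balls $\{\mathcal{B}(j)\}_{j\in S}$ are pairwise disjoint. I would prove this by contradiction: if $j,j' \in S$ are such that $j$ is selected first and $p \in \mathcal{B}(j)\cap \mathcal{B}(j')$, then $j' \in \mathcal{B}(p) \subseteq \mathcal{F}(j)$; since $j' \in P'$ at the later iteration, $j'$ would have been placed in $C_j$ and removed from $P'$, contradicting the fact that $j'$ is later added to $S$. This disjointness, combined with $y_j \leq \sum_{i \in \mathcal{B}(j)} x_i$ and $\sum_i x_i \leq k$, yields $\sum_{j\in S} y_j \leq k$.

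Next I would prove the key comparison $\tilde z_i \geq z_i$ for every point $i$ that is placed in some cluster. When $j$ is selected and $i \in C_j$, the algorithm sets $\tilde z_i = y_j = \min\{1,\sum_{m\in \mathcal{B}(j)} x_m\}$. By the LP1 constraints, $z_j \leq \sum_{m\in \mathcal{B}(j)} x_m$ and $z_j \leq 1$, so $y_j \geq z_j$. Since $j$ was picked as the argmax of $z$ over $P'$ and $i \in P'$ at that time, $z_j \geq z_i$, hence $\tilde z_i \geq z_i$. I would also check that the termination condition forces every $i$ with $z_i > 0$ to eventually end up in some $C_j$: otherwise such an $i$ would remain in $P'$ with $z_i > 0$, contradicting termination.

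Combining these ingredients, for any color class $X \in \{R,B\}$ with coverage requirement $q \in \{r,b\}$,
\begin{align*}
   \sum_{j \in S} |C_j \cap X|\, y_j \;=\; \sum_{j\in S} \sum_{i\in C_j\cap X} \tilde z_i \;\geq\; \sum_{i\in X,\, i\in \bigcup_j C_j} z_i \;\geq\; \sum_{i\in X} z_i \;\geq\; q,
\end{align*}
using $\tilde z_i \geq z_i$, pairwise disjointness of the $C_j$'s, and the last inequality from LP1. Applied to $B$ this is LP2's blue constraint; applied to $R$ this is the lower bound $\sum_j r_j y_j \geq r$ on the objective. Together with $y \in [0,1]$ (by construction) and $\sum_j y_j \leq k$, this establishes the ``Moreover'' statement. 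The main obstacle is the ball-disjointness step, since both the size constraint and the reindexing of the coverage sums rely on it; once it is in hand, the rest is bookkeeping against LP1.
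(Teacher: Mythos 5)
Your proposal is correct and follows essentially the same route as the paper's proof: the same two key observations (pairwise disjointness of the balls $\{\mathcal{B}(j)\}_{j\in S}$, proved by the same first-selected argument, and the comparison $\tilde z_i \ge z_i$ for clustered points via the argmax choice and the LP1 ball constraint), followed by the same reindexed summations for the coverage bounds and the cardinality bound $\sum_j y_j \le k$. Your explicit check that every point with $z_i>0$ lands in some cluster is a detail the paper uses implicitly, but it is the same argument.
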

\begin{proof}
The proof of the first statement is clear from the condition in the while loop of the algorithm.\\
For the second statement, observe that, by the definition of $C_j$ as stated in the algorithm, $C_j \subseteq \bigcup_{i \in \mathcal{B}(j)} \mathcal{B}(i) = \mathcal{F}(j)$. Since in each iteration, the cluster is removed from $P'$, the clusters are clearly disjoint.
\\
 In order to prove that $y$ is feasible this we first state some useful observations. 
\begin{itemize}
    \item Firstly, for any $i\in P$ there is at most one $j\in S$ such that $d(i,j)\leq 1$. This is true because if there were $j,j'\in S$ such that both $j,j' \in B(j)$ then, assuming w.l.o.g. $j$ was considered before in the while loop, $j'\in C_j$ and thus $j'$ cannot be in $S$ which is a contradiction.
    \item Secondly, note that for any $j_1\in P$ such that $j_1\in C_j$ for some $j$, then $\tilde{z}_{j} = \tilde{z}_{j_1}\geq z_{j_1}$. This is trivially true if $\tilde{z}_{j}=1$, otherwise $\tilde{z}_{j}=\sum_{i\in \mathcal{B}(j)}x_i \geq z_j\geq z_{j_1}$ where the first inequality follows from LP1 constraints and second inequality from the fact that when $C_j$ was removed, $z_j$ had the highest $z$ value. 
\end{itemize}
Now we show that $y$ is feasible for LP2 with objective value at least $r$. Firstly we show that $\sum_{j\in S}r_j y_j \geq r$. To see this,
\begin{align*}
    \sum_{j\in S}r_j y_j  &= \sum_{j\in S}|R_j| y_j \\
    &= \sum_{j\in S}\sum_{j'\in R_j} \tilde{z}_j \hspace{0.5cm} \text{($y_j = \tilde{z}_j$ for any $j\in S$)} \\
    &\geq \sum_{j\in S}\sum_{j'\in R_j} z_{j'} \hspace{0.5cm} \text{(from second observation, $\tilde{z}_j\geq z_{j'}$ for any $j'\in C_j$)}\\
    &= \sum_{j' \in R : z_{j'} > 0} z_{j'} \hspace{0.5cm} \text{(since $C_j$'s are disjoint and contain all $j$ s.t. $z_j>0$)}\\
    &= \sum_{j' \in R} z_{j'} \geq r \hspace{0.4cm} \text{(since $z$ satisfies LP1))}
\end{align*}
Similarly $\sum_{j\in S}b_j y_j \geq b$. Finally we will show that $\sum_{j\in S} y_j \leq k$,
\begin{align*}
    \sum_{j\in S} y_j &\leq \sum_{j\in S} \sum_{j'\in \mathcal{B}(j)} x_{j'} \hspace{0.5cm} \text{(since $y_j\leq \sum_{j'\in \mathcal{B}(j)}x_{j'}$)}\\
    &\leq \sum_{j'\in P} x_{j'} \hspace{0.5cm} \text{(from the first observation)}\\
    &\leq k \hspace{0.5cm} \text{(since $x$ satisfies LP1)}
\end{align*}
This concludes the proof of the claim that $y$ is a feasible solution to LP2 with objective value at least $r$.
\end{proof}

\end{subappendices}

\end{document}